\documentclass{llncs}

\usepackage{oldlfont,amsmath,amssymb,epsf,stmaryrd,epsfig,color}

\usepackage{verbatim}

\def\fa{\forall}
\def\ex{\exists}

\newcommand{\VL}[1]{}
\newcommand{\VC}[1]{#1}

\newbox\tempa
\newbox\tempb
\newdimen\tempc
\def\mud#1{\hfil $\displaystyle{\mathstrut #1}$\hfil}
\def\rig#1{\hfil $\displaystyle{#1}$}
\def\irulehelp#1#2#3{\setbox\tempa=\hbox{$\displaystyle{\mathstrut #2}$}%
                        \setbox\tempb=\vbox{\halign{##\cr
        \mud{#1}\cr
        \noalign{\vskip\the\lineskip}
        \noalign{\hrule height 0pt}
        \rig{\vbox to 0pt{\vss\hbox to 0pt{${\; #3}$\hss}\vss}}\cr
        \noalign{\hrule}
        \noalign{\vskip\the\lineskip}

        \mud{\copy\tempa}\cr}}
                      \tempc=\wd\tempb
                      \advance\tempc by \wd\tempa
                      \divide\tempc by 2 }
\def\irule#1#2#3{{\irulehelp{#1}{#2}{#3}
                     \hbox to \wd\tempa{\hss \box\tempb \hss}}}

\begin{document}
\title{Decidability, Introduction Rules, and Automata}
\author{Gilles Dowek\inst{1} \and Ying Jiang\inst{2}}
\institute{
Inria, 
23 avenue d'Italie,
CS 81321, 75214 Paris Cedex 13, France, 
{\tt gilles.dowek@inria.fr}.
\and
State Key Laboratory of Computer Science,
Institute of Software, 
Chinese Academy of Sciences,
100190 Beijing, China,
{\tt jy@ios.ac.cn}.}
\date{}
\maketitle
\pagestyle{plain}

\begin{abstract}
We present a method to prove the decidability of provability in
several well-known inference systems. This method generalizes both
cut-elimination and the construction of an automaton recognizing the
provable propositions.
\end{abstract}

\section{Introduction}

The goal of this paper is to connect two areas of logic: proof theory
and automata theory, that deal with similar problems, using a different
terminology.

To do so, we first propose to unify the terminology, by extending the
notions of {\em introduction rule}, {\em automaton}, {\em cut}, and
{\em cut-elimination} to arbitrary inference systems.  An {\em
  introduction rule} is defined as any rule whose premises are smaller
than its conclusion and an {\em automaton} as any inference system
containing introduction rules only. Provability in an automaton is
obviously decidable.  A {\em cut} is defined as any proof ending with
a non-introduction rule, whose major premises are proved with a proof 
ending with introduction rules.  We show that a cut-free proof
contains introduction rules only.  A system is said to have the {\em
cut-elimination property} if every proof can be transformed into a
cut-free proof. Such a system is equivalent to an automaton.

Using this unified terminology, we then propose a general {\em
saturation} method to prove the decidability of an inference
system, by transforming it into a system that has the cut-elimination
property, possibly adding extra rules. The outline of this method is
the following. Consider a proof containing a non-introduction rule and
focus on the sub-proof ending with this rule
$${\small 
  \irule{\irule{\pi^1}
               {s^1}
               {}
        ~~~...~~~
         \irule{\pi^n}
               {s^n}
               {}
        }
        {s}
        {\mbox{non-intro}}}$$
Assume it is possible to recursively eliminate the cuts 
in the proofs $\pi^1$, ..., $\pi^n$, that is to transform them into 
proofs containing introduction rules only, hence 
ending with an introduction rule. We obtain a proof of the form
$${\small 
  \irule{\irule{\irule{\rho^1_1}{s^1_1}{}~~...~~\irule{\rho^1_{m_1}}{s^1_{m_1}}{}}
               {s^1}
               {\mbox{intro}}
         ~~~~~~~~~~~~~~~~~~...~~~~~~~~~~
         \irule{\irule{\rho^n_1}{s^n_1}{}~~...~~\irule{\rho^n_{m_n}}{s^n_{m_n}}{}}
               {s^n}
               {\mbox{intro}}
        }
        {s}
        {\mbox{non-intro}}}$$
We may moreover tag each premise $s^1$, ..., $s^n$
of the non-introduction rule as {\em major} or {\em minor}. 
For 
instance, each elimination rule of
Natural Deduction \cite{Prawitz} has one major premise and 
the cut rule of Sequent Calculus \cite{Kleene} has two. 
If the major premises are $s^1$, ..., $s^k$ and {\em minor} ones 
$s^{k+1}, ..., s^n$, the proof above can be 
decomposed as
$${\small 
  \irule{\irule{\irule{\rho^1_1}{s^1_1}{}~~...~~\irule{\rho^1_{m_1}}{s^1_{m_1}}{}}
               {s^1}
               {\mbox{intro}}
         ~~~~~~~~~~~~~~~~~~...~~~~~~~~~~
         \irule{\irule{\rho^k_1}{s^k_1}{}~~...~~\irule{\rho^k_{m_k}}{s^k_{m_k}}{}}
               {s^k}
               {\mbox{intro}}
         ~~~~~~~~~~~~~~~~
         \irule{\pi'^{k+1}}
               {s^{k+1}}
               {}
         ~~...~~
         \irule{\pi'^n}
               {s^n}
               {}
        }
        {s}
        {\mbox{non-intro}}}$$
A proof of this form is called a {\em cut} and it must be reduced 
to another proof. The definition of the reduction is specific to 
each system under consideration. In several cases, however, such a cut 
is reduced to a proof built with the proofs 
$\rho^1_1$, ...,
$\rho^1_{m_1}$, ..., $\rho^k_1$, ..., $\rho^k_{m_k}$, $\pi'^{k+1},$ ..., 
$\pi'^n$
and 
a derivable rule allowing to deduce the conclusion $s$ from the
premises $s^1_1, ..., s^1_{m_1}, ...,$ $s^k_1, ..., s^k_{m_k}, s^{k+1},
..., s^n$. Adding such derivable rules in order to eliminate cuts is 
called a {\em saturation} procedure. 

Many cut-elimination proofs, 
typically the cut-elimination proofs for Sequent Calculus 
\cite{GirardLafontTaylor}, 
do not proceed by eliminating cuts step by
step, but by proving that a non-introduction rule is admissible in the
system obtained by dropping this rule, that is, proving that
if the premises $s^1, ..., s^n$ of this rule are
provable in the restricted system, 
then so is its conclusion $s$.  Proceeding by induction on
the structure of proofs of $s^1, ..., s^n$ leads to consider cases
where each major premise $s^i$ has a proof ending with an introduction
rule, that is also proofs of the form
$${\small 
  \irule{\irule{\irule{\rho^1_1}{s^1_1}{}~~...~~\irule{\rho^1_{m_1}}{s^1_{m_1}}{}}
               {s^1}
               {\mbox{intro}}
         ~~~~~~~~~~~~~~~~~~...~~~~~~~~~~
         \irule{\irule{\rho^k_1}{s^k_1}{}~~...~~\irule{\rho^k_{m_k}}{s^k_{m_k}}{}}
               {s^k}
               {\mbox{intro}}
          ~~~~~~~~~~~~~~~~~
         \irule{\pi_{k+1}}{s^{k+1}}{}
         ~~...~~
         \irule{\pi_n}{s^n}{}
        }
        {s}
        {\mbox{non-intro}}}$$

In some cases, the saturation method succeeds showing that every proof
can be transformed into a proof formed with introduction rules
only. Then, the inference system under consideration is equivalent,
with respect to provability, to the automaton obtained by dropping all
its non-introduction rules.  This equivalence obviously ensures the
decidability of provability in the inference system.  In other cases, in
particular when the inference system under consideration is
undecidable, the saturation method succeeds only partially: typically
some non-introduction rules can be eliminated but not all, or only a
subsystem is proved to be equivalent to an automaton.

This saturation method is illustrated with examples coming from both
proof theory and automata theory: Finite Domain Logic, Alternating
Pushdown Systems, and three fragments of Constructive Predicate Logic,
for which several formalizations are related: Natural Deduction,
Gentzen style Sequent Calculus, Kleene style Sequent Calculus, and
Vorob'ev-Hudelmaier-Dyckhoff-Negri style Sequent Calculus.  The
complexity of these provability problems, when they are decidable, is
not discussed in this paper and is left for future work, for instance
in the line of \cite{BasinGanzinger,McAllester}.

In the remainder of this paper, the notions of introduction rule,
automaton, and cut are defined in Section \ref{secautomata}.  Section
\ref{seclabels} discusses the case of Finite State Automata.  In
Sections \ref{secdecidable} and \ref{secpartial}, examples of
cut-elimination results are presented.  In the examples of Section
\ref{secdecidable}, the non-introduction rules can be completely
eliminated transforming the inference systems under considerations
into automata, while this elimination is only partially successful in
the undecidable examples of Section \ref{secpartial}. 
\VC{The proofs, and
some developments, are omitted from this extended abstract.  They can
be found in the long version of the paper
{\tt https://who.rocq.inria.fr/Gilles.Dowek/Publi/introlong.pdf}}.

\section{Introduction rules, Automata, and Cuts}
\label{secautomata}

\subsection{Introduction rules and Automata}

Consider a set $S$, whose elements typically are propositions,
sequents, etc. Let $S^*$ be the set of finite lists of elements of 
$S$. 

\begin{definition}[Inference rule, Inference system, Proof]
An {\em inference rule} is a partial function from $S^*$ to $S$.  
If $R$ is an inference rule and $s = R(s_1, ..., s_n)$, 
we say that the conclusion $s$ is {\em proved} from the premises
$s_1, ..., s_n$ with the rule $R$ and
we write
$${\small \irule{s_1~...~s_n}{s}{R}}$$
Some rules are equipped with an
extra piece of information, tagging each premise $s_1$, ..., $s_n$
as {\em major} or {\em minor}. 
An {\em inference system} is a
set of inference rules.  A {\em proof} in an inference system is a
finite tree labeled by elements of $S$ such that for each node labeled
with $s$ and whose children are labeled with $s_1$, ..., $s_n$, there
exists an inference rule $R$ of the system such that
$${\small \irule{s_1~...~s_n}{s}{R}}$$
A proof is a {\em proof} of $s$ if its root is labeled by $s$.  An
element of $S$ is said to be {\em provable}, if it has a proof.
\end{definition}

\begin{definition}[Introduction rule, Pseudo-automaton] 
Consider a set $S$ and a well-founded 
order $\prec$ on $S$. 
A rule $R$ is said to
be an {\em introduction} rule with respect to this order, if whenever
$${\small \irule{s_1~...~s_n}
        {s}
        {R}}$$
we have $s_1 \prec s$, ..., $s_n \prec s$.  A {\em pseudo-automaton}
is an inference system containing introduction rules only.
\end{definition}

Except in the system ${\cal D}$ (Section \ref{secdychkoff}), this
order $\prec$ is always that induced by the size of the propositions
and sequents. It is left implicit.

\begin{definition}[Finitely branching system, Automaton]
\label{defautomaton}
An inference system is said to be {\em finitely branching}, if for
each conclusion $s$, there is only a finite number of lists of
premises $\overline{s}_1$, ..., $\overline{s}_p$ such that $s$ can be
proved from $\overline{s}_i$ with a rule of the system.  An {\em
automaton} is a finitely branching pseudo-automaton.
\end{definition}

\subsection{Cuts}

We define a general notion of cut, that applies to all inference
systems considered in this paper.  More specific notions of cut
will be introduced later for some systems, and the general notion of
cut defined here will be emphasized as {\em general cut} to avoid
ambiguity.

\begin{definition}[Cut]
\label{defgeneralcut}
A {\em (general) cut} is a proof of the form 
$${\small 
  \irule{\irule{\irule{\rho^1_1}{s^1_1}{}~~...~~\irule{\rho^1_{m_1}}{s^1_{m_1}}{}}
               {s^1}
               {\mbox{intro}}
         ~~~~~~~~~~~~~~~~~~...~~~~~~~~~~
         \irule{\irule{\rho^k_1}{s^k_1}{}~~...~~\irule{\rho^k_{m_k}}{s^k_{m_k}}{}}
               {s^k}
               {\mbox{intro}}
          ~~~~~~~~~~~~~~~~~
         \irule{\pi_{k+1}}{s^{k+1}}{}
         ~~...~~
         \irule{\pi_n}{s^n}{}
        }
        {s}
        {\mbox{non-intro}}}$$
where $s^1, ..., s^k$ are the major premises of the non-introduction
rule.  A proof {\em contains a cut} if one of its sub-proofs is a cut.
A proof is {\em cut-free} if it contains no cut.  An inference system
{\em has the cut-elimination property} if every element that has a
proof also has a cut-free proof.
\end{definition}

\begin{lemma}[Key lemma]
\label{key}
A proof is cut-free if and only if it contains introduction rules only. 
\end{lemma}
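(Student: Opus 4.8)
The plan is to prove the two implications of the biconditional separately; both are short, and the only step with any real content is the choice of a well-placed sub-proof in the harder direction.

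For the easy direction, suppose a proof $\pi$ contains introduction rules only. Then every sub-proof of $\pi$ also contains introduction rules only, so in particular the last rule of every sub-proof of $\pi$ is an introduction rule. By Definition~\ref{defgeneralcut} a cut ends with a non-introduction rule, so no sub-proof of $\pi$ can be a cut, and hence $\pi$ is cut-free.

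For the converse I would argue by contraposition: assuming $\pi$ contains at least one occurrence of a non-introduction rule, I would exhibit a sub-proof of $\pi$ that is a cut, which shows $\pi$ is not cut-free. The set $X$ of sub-proofs of $\pi$ whose last rule is a non-introduction rule is then non-empty; since $\pi$ is a finite tree, $X$ has an element $\theta$ that is minimal for the subtree ordering, i.e. no proper sub-proof of $\theta$ ends with a non-introduction rule. Write $\theta$ as ending with a non-introduction rule $R$ whose premises $s^1, \dots, s^n$ are proved by sub-proofs $\theta^1, \dots, \theta^n$, with major premises $s^1, \dots, s^k$. Each $\theta^i$ is a proper sub-proof of $\theta$, hence by minimality its last rule is not a non-introduction rule, i.e. each $\theta^i$ ends with an introduction rule (the degenerate case $n = 0$, hence $k = 0$ with an empty list of major premises, being allowed). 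This is exactly the shape required by Definition~\ref{defgeneralcut}: a proof ending with a non-introduction rule whose major-premise sub-proofs all end with an introduction rule. So $\theta$ is a cut and a sub-proof of $\pi$, contradicting cut-freeness of $\pi$.

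The argument is essentially bookkeeping; the one point that must be got right is the selection of $\theta$ as a ``lowest'' non-introduction occurrence in $\pi$, since that is what forces every sub-proof sitting immediately above its last rule to end with an introduction rule. This uses only that proofs are finite trees, equivalently that the subtree order is well-founded, and it is the sole non-routine step.
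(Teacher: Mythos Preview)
Your proof is correct. The paper argues the harder direction by direct structural induction rather than by contraposition: given a cut-free proof, the induction hypothesis applied to the immediate sub-proofs (which are themselves cut-free) shows they contain only introduction rules, and then the last rule must also be an introduction rule, since otherwise the whole proof would itself be a cut. Your minimality argument is the contrapositive of this induction: selecting a sub-proof $\theta$ that is minimal among those ending in a non-introduction rule plays exactly the role of the inductive step, forcing the immediate sub-proofs of $\theta$ to end with introduction rules. Both arguments rest on the same fact---well-foundedness of the subtree order on finite proof trees---so the mathematical content is identical; the paper's version is marginally shorter because it avoids naming the set $X$ and invoking minimality explicitly. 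One small remark: your parenthetical about $n=0$ is unnecessary, since a rule with no premises is vacuously an introduction rule and hence cannot be the last rule of $\theta$.
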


\begin{proof}
If a proof contains introduction rules only, it is obviously cut-free.
We prove the converse by induction over proof structure.  Consider a
cut-free proof. Let $R$ be the last rule of this proof and $\pi_1$,
..., $\pi_n$ be the proofs of the premises of this rule. The proof has
the form
$${\small 
  \irule{\irule{\pi_1}
               {s_1} 
               {}
         ~~~...~~~
         \irule{\pi_n}
               {s_n}
               {}
        }
        {s}
        {R}}$$
By induction hypothesis, the proofs $\pi_1$, ..., $\pi_n$ contain
introduction rules only. As the proof is cut-free, the rule $R$ must
be an introduction rule.
\end{proof}

Consider a finitely-branching 
inference system ${\cal I}$ and the automaton ${\cal A}$
formed with the introduction rules of ${\cal I}$. If ${\cal I}$ has
the cut-elimination property, then every element that has a proof in
${\cal I}$ has a cut-free proof, that is a proof formed with
introduction rules of ${\cal I}$ only, that is a proof in ${\cal
  A}$.  Thus, ${\cal I}$ and ${\cal A}$ are equivalent with respect to
provability. Since ${\cal A}$ is decidable, so is ${\cal I}$.

\section{Finite State Automata}
\label{seclabels}

In this section, we show that the usual notion of finite state
automaton is a particular case of the notion of automaton introduced
in Definition \ref{defautomaton}.

\VL{\subsection{Representing Finite State Automata}}

Consider a finite state automaton ${\cal A}$. We define a language
${\cal L}$ in predicate logic containing a constant $\varepsilon$; for
each symbol $\gamma$ of the alphabet of ${\cal A}$, a unary function
symbol, also written $\gamma$; and for each state $P$ of ${\cal A}$
a unary predicate symbol, also written $P$.  A closed term in ${\cal L}$ 
has the form 
$\gamma_1 (\gamma_2 ... (\gamma_n (\varepsilon)))$,
where $\gamma_1$, ..., $\gamma_n$ are function symbols.  
Such a term is called a {\em word}, written $w = \gamma_1 \gamma_2
... \gamma_n$.  A closed atomic proposition has the form $P(w)$, 
where $P$ is a state and $w$ a word. 
We build an inference system that consists of, 
for each transition rule $P \xrightarrow{~\gamma~} Q$ of 
$\cal A$, the introduction rule
$${\small \irule{Q(x)}{P(\gamma(x))}{}}$$
and, for each final state $F$ of $\cal A$, the introduction rule 
$${\small \irule{}{F(\varepsilon)}{}}$$
It is routine to check that a word $w$ is recognized by the automaton
${\cal A}$ in a state $I$ if and only if the proposition $I(w)$ has
a proof in the corresponding system.

\VL{\begin{example}[$a^{2n+1}$]
\label{exampleeven}
Consider the automaton defined by the rules
$${\small odd \xrightarrow{~a~} even}$$
$${\small even \xrightarrow{~a~} odd}$$
where the state $even$ is final. It recognizes the words
of the form $a^{2n+1}$ in $odd$. 

This automaton is represented by the following inference system
$${\small \irule{}{even(\varepsilon)}{}}$$
$${\small \irule{even(x)}{odd(a(x))}{}}$$
$${\small \irule{odd(x)}{even(a(x))}{}}$$
and the proposition $odd(a(a(a(\varepsilon))))$ has the proof
$${\small 
  \irule{\irule{\irule{\irule{}
                             {even(\varepsilon)}
                             {}
                      }
                      {odd(a(\varepsilon))}
                      {}
               }
               {even(a(a(\varepsilon)))}
               {}
        }
        {odd(a(a(a(\varepsilon))))}
        {}}$$
\end{example}}

\VL{

\subsection{Recognized words as conclusions and as proof-terms}

Consider the inference system of Example \ref{exampleeven}, the term
$a(a(a(\varepsilon)))$ is part of the proved proposition
$odd(a(a(a(\varepsilon))))$. This contrasts with the usual
definition of an automaton, where transition rules are labeled and the
recognized word is built from the labels of the transition rules in
the derivation. For instance, the word $aaa$ is recognized in $odd$
because the path
$${\small odd \xrightarrow{~a~} even \xrightarrow{~a~} odd \xrightarrow{~a~} even}$$
goes from the state $odd$ to the final state $even$ and is labeled by this
word.  Like transition rules, inference rules can be labeled. For
instance, the unlabeled rules presented in Example \ref{exampleeven}
may be assigned the labels $E$, $A_1$, and $A_2$
$${\small \irule{}{even(\varepsilon)}{E}}$$
$${\small \irule{even(x)}{odd(a(x))}{A_1}}$$
$${\small \irule{odd(x)}{even(a(x))}{A_2}}$$
and the proof 
$${\small 
  \irule{\irule{\irule{\irule{}
                             {even(\varepsilon)}
                             {}
                      }
                      {odd(a(\varepsilon))}
                      {}
               }
               {even(a(a(\varepsilon)))}
               {}
        }
        {odd(a(a(a(\varepsilon))))}
        {}}$$
is then written as
$${\small 
  \irule{\irule{\irule{\irule{}
                             {even(\varepsilon)}
                             {E}
                      }
                      {odd(a(\varepsilon))}
                      {A_1}
               }
               {even(a(a(\varepsilon)))}
               {A_2}
        }
        {odd(a(a(a(\varepsilon))))}
        {A_1}}$$
and the sequence of labels $A_1 A_2 A_1 E$ has the same structure as
the recognized word $aaa$: the recognized word is simply obtained by
replacing the labels $A_1$ and $A_2$ by the letter $a$ and the label
$E$ by the empty word.

It is well-known that the propositions and the labels are redundant.
In the same way we can omit the rule labels in proofs, we can omit the
propositions. In our example, this would yield the tree
$${\small 
  \irule{\irule{\irule{\irule{}
                             {~~~}
                             {E}
                      }
                      {~~~}
                      {A_1}
               }
               {~~~}
               {A_2}
         }
         {~~~}
         {A_1}}$$
also written, in a linear form, $A_1(A_2(A_1(\varepsilon)))$.

The correctness of such a tree can be checked and the proved
proposition can be inferred from this tree: starting from the leaf
and going step by step to the root, we can reconstruct the
propositions labelling the nodes, just by applying the rules labelling the nodes
to the propositions labelling their child.

In our example, the conclusion
of the rule $E$ can only be $even(\varepsilon)$, from the premise
$even(\varepsilon)$, the rule $A_1$ produces the conclusion
$odd(a(\varepsilon))$, etc.  This is the basis of the Curry-de
Bruijn-Howard correspondence: there, this tree is called a proof-term
and the computation of the proposition proved by a proof-term is a
type-inference algorithm: given a proof-term $\pi$, we can compute the
proposition $A$ (if any) such that $\pi:A$ is derivable in a typing system, 
where the rule
$${\small \irule{s_1~~...~~s_n}{s}{R}}$$
is replaced by the typing rule 
$${\small \irule{\pi_1:s_1~~...~~\pi_n:s_n}{R(\pi_1, ..., \pi_n):s}{}}$$
In our example, the typing rules are
$${\small \irule{}{E:even(\varepsilon)}{}}$$
$${\small \irule{\pi:even(x)}{A_1(\pi):odd(a(x))}{}}$$
$${\small \irule{\pi:odd(x)}{A_2(\pi):even(a(x))}{}}$$
When an inference system only contains introduction rules of the form
$${\small \irule{Q(x)}{P(\gamma(x))}{\Gamma_i}}$$
and 
$${\small \irule{}{F(\varepsilon)}{E_j}}$$
then the proof-term of a proposition $P(t)$ has the same structure as 
the term $t$.

This is why the word recognized by an automaton with a derivation
$\pi$ can either be defined as the proof-term corresponding to this
derivation, or as the term $t$ in its type $P(t)$.  If it is defined
as the proof-term, then the terms can be dropped and the rules
reformulated as
$${\small \irule{Q}{P}{\Gamma_i}}$$
and 
$${\small \irule{}{F}{E_j}}$$
In an automaton, a proof-term and its type have the same structure.
So the recognized word can be defined either as the proof-term itself or as its type.
However, this parallelism does not extend to systems that contain non-introduction rules.  
For instance, with the extra non-introduction rule
$${\small \irule{even(a(x))}{odd(x)}{\Pi}}$$
the proposition $odd(a(a(a(\varepsilon))))$ has the following proof
$${\small 
  \irule{\irule{\irule{\irule{\irule{\irule{}
                                           {even(\varepsilon)}
                                           {E}
                                    }
                                    {odd(a(\varepsilon))}
                                    {A_1}
                             }
                             {even(a(a(\varepsilon)))}
                             {A_2}
                      }
                      {odd(a(a(a(\varepsilon))))}
                      {A_1}
              } 
              {even(a(a(a(a(\varepsilon)))))}
              {A_2}
        }
        {odd(a(a(a(\varepsilon))))}
        {\Pi}}$$
but the proof-term $\Pi(A_2(A_1(A_2(A_1(E)))))$ does not have the
same structure as the term $a(a(a(\varepsilon)))$.  This is why, in
general, it is more convenient to define the recognized word as the type of the proof-term, rather than as the proof-term itself.

In the same way, in Natural Deduction we have, among others, the
introduction rules
$${\small 
  \irule{}
        {\Gamma \vdash \top}
        {\mbox{top}}}$$
$${\small 
  \irule{\Gamma \vdash A~~~\Gamma \vdash B}
        {\Gamma \vdash A \wedge B}
        {\mbox{and}}}$$
The proposition 
$(\top \wedge  (\top \wedge \top)) \wedge \top$---in prefix notation
$\wedge(\wedge(\top,\wedge(\top,\top)),\top)$---has
the following proof 
$${\small 
  \irule{\irule{\irule{}
                      {\vdash \top}
                      {\mbox{top}}
                ~~~~~~~~~~~~~~~
                \irule{\irule{}{\vdash \top}{\mbox{top}}
                       ~~~~~~~~~
                       \irule{}{\vdash \top}{\mbox{top}}
                      }
                      {\vdash \top \wedge \top}
                      {\mbox{and}}
               }
               {\vdash \top \wedge (\top \wedge \top)}
               {\mbox{and}}
         ~~~~~~~~~~~~~~~~~~~~~~
         \irule{}
               {\vdash \top}
               {\mbox{top}}
        }
        {\vdash (\top \wedge  (\top \wedge \top)) \wedge \top}
        {\mbox{and}}}$$
and the proof-term $\mbox{and}(\mbox{and}(\mbox{top},
\mbox{and}(\mbox{top},\mbox{top})),\mbox{top})$ has the same structure
as its type $\wedge(\wedge(\top,\wedge(\top,\top)),\top)$.  But, if we
add the non-introduction rule
$${\small 
  \irule{\Gamma \vdash A \wedge B}
        {\Gamma \vdash B}
        {\mbox{proj$_2$}}}$$
we have the proof 
$${\small 
  \irule{\irule{\irule{\irule{}
                             {\vdash \top}
                             {\mbox{top}}
                       ~~~~~~~~~~~~~~~
                       \irule{\irule{}{\vdash \top}{\mbox{top}}
                              ~~~~~~~~~
                              \irule{}{\vdash \top}{\mbox{top}}
                             }
                             {\vdash \top \wedge \top}
                             {\mbox{and}}
                      }
                      {\vdash \top \wedge (\top \wedge \top)}
                      {\mbox{and}}
                ~~~~~~~~~~~~~~~~~~~~~~
                \irule{}
                      {\vdash \top}
                      {\mbox{top}}
               }
               {\vdash (\top \wedge  (\top \wedge \top)) \wedge \top}
               {\mbox{and}}
       }
       {\vdash \top}
       {\mbox{proj$_2$}}}$$
and the proof-term $\mbox{proj$_2$}(\mbox{and}(\mbox{and}(\mbox{top},
\mbox{and}(\mbox{top},\mbox{top})),\mbox{top}))$ does not have the
same structure as its type $\top$.
}
\section{From cut-elimination to automata}
\label{secdecidable}

In this section, we present two cut-elimination theorems, that permit
to completely eliminate the non-introduction rules and prove, this
way, the decidability of Finite Domain Logic and of Alternating
Pushdown Systems, respectively.

\subsection{Finite Domain Logic}
\label{secfdl}

\begin{figure}[t]
{\scriptsize
\noindent\framebox{\parbox{\textwidth
}{
{
~~~~~~~~~~~~~~$\begin{array}{lr}
\irule{}
      {\Gamma, A \vdash A}
      {\mbox{axiom}}
~~~~~~~~~~~~~~~~~~~~~~~~~~~~~~~~~~~~~~~~~~~~~~~~
\\
\irule{}
      {\Gamma \vdash L}
      {\mbox{atom if $L \in {\cal P}$}}
\\
\irule{}
      {\Gamma \vdash \top}
      {\mbox{$\top$-intro}} 
& 
\\
& 
\irule{\Gamma \vdash \bot}
      {\Gamma \vdash A}
      {\mbox{$\bot$-elim}} 
\\
\\
\irule{\Gamma \vdash A~~~\Gamma \vdash B}
      {\Gamma \vdash A \wedge B}
      {\mbox{$\wedge$-intro}} 
& 
\irule{\Gamma \vdash A \wedge B}
      {\Gamma \vdash A}
      {\mbox{$\wedge$-elim}} 
\\
& 
\irule{\Gamma \vdash A \wedge B}
      {\Gamma \vdash B}
      {\mbox{$\wedge$-elim}} 
\\
\\
\irule{\Gamma \vdash A}
      {\Gamma \vdash A \vee B}
      {\mbox{$\vee$-intro}}
&
\irule{\Gamma \vdash A \vee B~~~\Gamma, A \vdash C~~~\Gamma, B \vdash C}
      {\Gamma \vdash C}
      {\mbox{$\vee$-elim}}
\\
\irule{\Gamma \vdash B}
      {\Gamma \vdash A \vee B}
      {\mbox{$\vee$-intro}}
\\
\\
\irule{\Gamma \vdash (c_1/x)A~~~...~~~\Gamma \vdash (c_n/x)A}
      {\Gamma \vdash \fa x A}
      {\mbox{$\fa$-intro}}
& 
\irule{\Gamma \vdash \fa x A}
      {\Gamma \vdash (c_i/x)A}
      {\mbox{$\fa$-elim}} 
\\
\\
\irule{\Gamma \vdash (c_i/x)A}
      {\Gamma \vdash \ex x~A}
      {\mbox{$\ex$-intro}}
&
\irule{\Gamma \vdash \ex x~A~~~\Gamma, (c_1/x)A \vdash C~~~...~~~
\Gamma, (c_n/x)A \vdash C}
      {\Gamma \vdash C}
      {\mbox{$\ex$-elim}}
\end{array}$
\begin{center}
\caption{Finite Domain Logic \label{deffdl}}
\end{center}
}}}
}
\end{figure}

We begin with a toy example, {\em Finite Domain Logic}, 
motivated by its simplicity: we can prove a cut-elimination
theorem, showing the system is equivalent to the automaton
obtained by dropping its non-introduction rules.

Finite Domain Logic is a version of Natural Deduction tailored to
prove the propositions that are valid in a given finite model $\cal M$. The
differences with the usual Natural Deduction are the following: 
a proposition of the form $A \Rightarrow B$ is just an abbreviation for $\neg
A \vee B$ and negation has been pushed to atomic propositions
using de Morgan's laws;
the $\fa$-intro  and the $\ex$-elim rules are replaced by enumeration rules, 
and an {\em atom} rule is added to prove
closed atomic propositions and their negations 
valid in the underlying model.

If the model ${\cal M}$ is formed with a domain $\{a_1, ..., a_n\}$
and relations $R_1$, ..., $R_m$ over this domain, we consider the
language containing constants $c_1, ..., c_n$ for the elements $a_1,
..., a_n$ and predicate symbols $P_1, ..., P_m$ for the relations
$R_1, ..., R_m$.  The {\em Finite Domain Logic} of the model ${\cal
  M}$ is defined by the inference system of Figure \ref{deffdl}, where
the set ${\cal P}$ contains, for each atomic proposition $P_i(c_{j_1},
..., c_{j_k})$, either the proposition $P_i(c_{j_1}, ..., c_{j_k})$ if
$\langle a_{j_1}, ..., a_{j_k} \rangle$ is in $R_i$, or the
proposition $\neg P_i(c_{j_1}, ..., c_{j_k})$, otherwise.

In this system, the introduction rules are those presented in the
first column: the axiom rule, the atom rule, and the rules
$\top$-intro, $\wedge$-intro, $\vee$-intro, $\fa$-intro, and
$\ex$-intro.  The non-introduction rules are those presented in the
second column.  Each rule has one major premise: the leftmost one.  A
cut is as in Definition \ref{defgeneralcut}.

\begin{theorem}[Soundness, Completeness, and Cut-elimination]
Let $B$ be a closed proposition, the following are equivalent:
(1.) the proposition $B$ has a proof,
(2.) the proposition $B$ is valid in ${\cal M}$, 
(3.) the proposition 
$B$ has a cut-free proof, that is a proof formed with introduction 
rules only.
\end{theorem}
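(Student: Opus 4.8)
The plan is to close the cycle of implications $(3)\Rightarrow(1)\Rightarrow(2)\Rightarrow(3)$. The implication $(3)\Rightarrow(1)$ is immediate, since a cut-free proof is in particular a proof; and by Lemma~\ref{key} being cut-free is the same as being built from introduction rules only, so the two phrasings of clause (3) coincide.

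For $(1)\Rightarrow(2)$, soundness, I would prove the more general statement: whenever a sequent $\Gamma\vdash A$ is provable and ${\cal M}\models C$ for every $C\in\Gamma$, then ${\cal M}\models A$; the theorem's case is $\Gamma=\emptyset$. Since substituting constants for variables keeps propositions closed, ``valid in ${\cal M}$'' is simply truth in ${\cal M}$, so no handling of assignments is needed. The proof is an induction on the structure of the derivation, inspecting each rule of Figure~\ref{deffdl}. Most cases are routine; the only ones using specific features of the system are: the \emph{atom} rule, sound because ${\cal P}$ contains, for each closed literal $L$, the literal $L$ exactly when ${\cal M}\models L$ (and $\neg L$ otherwise), so $L\in{\cal P}$ forces ${\cal M}\models L$; the \emph{$\bot$-elim} rule, sound vacuously because ${\cal M}\not\models\bot$; and the enumeration rules \emph{$\fa$-intro} and \emph{$\ex$-elim}, sound because the domain of ${\cal M}$ is exactly $\{a_1,\dots,a_n\}$ with $a_i$ denoted by $c_i$, so ${\cal M}\models\fa x\,A$ iff ${\cal M}\models(c_i/x)A$ for all $i$, and dually for $\ex$.

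The substantive direction is $(2)\Rightarrow(3)$: the pseudo-automaton consisting of the introduction rules alone is already complete for validity in ${\cal M}$, so here cut-elimination is recovered as a corollary of soundness plus this semantic completeness, rather than through any cut-reduction argument. I would prove, by well-founded induction on the size of $A$, the lemma: for every finite context $\Gamma$ and every closed proposition $A$, if ${\cal M}\models A$ then $\Gamma\vdash A$ has a cut-free proof; instantiating $\Gamma=\emptyset$ then yields $(2)\Rightarrow(3)$. By the syntactic conventions of the system, $\Rightarrow$ is an abbreviation and negation is driven down to the atoms, so $A$ is a literal, $\top$, $\bot$, $A_1\wedge A_2$, $A_1\vee A_2$, $\fa x\,A'$, or $\ex x\,A'$. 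If $A$ is a literal, ${\cal M}\models A$ gives $A\in{\cal P}$ and the \emph{atom} rule concludes; if $A=\top$, use \emph{$\top$-intro}; $A=\bot$ cannot arise since ${\cal M}\not\models\bot$. If $A=A_1\wedge A_2$, both conjuncts are valid and strictly smaller, so the induction hypothesis supplies cut-free proofs of $\Gamma\vdash A_1$ and $\Gamma\vdash A_2$ and \emph{$\wedge$-intro} concludes. If $A=A_1\vee A_2$, some disjunct $A_i$ is valid and smaller, so the induction hypothesis gives a cut-free proof of $\Gamma\vdash A_i$ and \emph{$\vee$-intro} concludes. If $A=\fa x\,A'$, then since the domain is named by $c_1,\dots,c_n$ each instance $(c_i/x)A'$ is valid, and each is strictly below $\fa x\,A'$ in the size order because substituting a constant for a variable does not increase size, so the induction hypothesis yields cut-free proofs of $\Gamma\vdash(c_i/x)A'$ for all $i$ and \emph{$\fa$-intro} concludes. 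The case $A=\ex x\,A'$ is analogous, using one witnessing instance $(c_i/x)A'$ and \emph{$\ex$-intro}. In every case the resulting proof ends with an introduction rule over cut-free sub-proofs, hence is cut-free.

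The main obstacle, modest as it is for this toy example, lies in the $(2)\Rightarrow(3)$ step: one should not attempt to rewrite an arbitrary given proof but rather rebuild one from scratch out of introduction rules, guided by the model, and one must check the size bookkeeping in the quantifier cases, namely that every ground instance $(c_i/x)A'$ lies strictly below $\fa x\,A'$ (resp. $\ex x\,A'$) in the well-founded order induced by proposition size. This is precisely why the enumeration rules, rather than the usual eigenvariable rules, are the right formulation: they keep $\fa$ and $\ex$ handled by introduction rules while making them complete for the finite model ${\cal M}$.
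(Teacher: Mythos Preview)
Your proposal is correct and follows essentially the same route as the paper: the cycle $(3)\Rightarrow(1)\Rightarrow(2)\Rightarrow(3)$, with soundness by induction on derivations (generalized to sequents with context) and strengthened completeness by induction on the structure of the proposition. The only cosmetic difference is that the paper phrases the soundness invariant as ``$A_1 \Rightarrow \cdots \Rightarrow A_n \Rightarrow B$ is valid in~$\mathcal{M}$'' rather than your ``if every hypothesis is true in~$\mathcal{M}$ then so is the conclusion,'' but for closed propositions these are the same statement; and your generalization of the completeness lemma to arbitrary~$\Gamma$ is harmless but unnecessary, since none of the introduction rules of Figure~\ref{deffdl} enlarge the context.
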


\VL{

\begin{proof}
To prove that {\em 1.} implies {\em 2.} (soundness), we actually prove, 
by induction over proof structure, a more general result: if $A_1$, ...,
$A_n$, $B$ are closed propositions and the sequent $A_1, ..., A_n
\vdash B$ has a proof, then the proposition $A_1 \Rightarrow
... \Rightarrow A_n \Rightarrow B$ is valid in ${\cal M}$.
To prove that {\em 2.} implies {\em 3.} (strengthened completeness), 
we proceed by induction over the structure of $B$. 
That {\em 3.} implies {\em 1.} is trivial, as cut-free proofs are 
proofs.
\end{proof}

}

Therefore, provability in Finite Domain Logic is decidable, as the provable
propositions are recognized by the automaton obtained by dropping
the non-introduction rules.  
Since the introduction rules preserve context emptiness,
the contexts can be ignored and the axiom rule can be dropped.  
This automaton could also be expressed in a more familiar way
with the transition rules
$${\small \begin{array}{rclrcl}
L & \hookrightarrow & \varnothing~\mbox{if $L \in {\cal P}$}~~~&
A \vee B & \hookrightarrow & \{A\}\\
\top & \hookrightarrow & \varnothing&
A \vee B & \hookrightarrow & \{B\}\\
A \wedge B & \hookrightarrow & \{A, B\}&
\fa x~A & \hookrightarrow & \{(c_1/x)A, ..., (c_n/x)A\}\\
&&&
\ex x~A & \hookrightarrow & \{(c_i/x)A\}~\mbox{for each $c_i$}
\end{array}}$$
\subsection{Alternating Pushdown Systems}
\label{aps}

\begin{figure}[t]
{\scriptsize
\noindent\framebox{\parbox{\textwidth
}{
{
~~~~~~~~~~~~~~$\begin{array}{ll}
\\
\irule{P_1(x)~...~P_n(x)}
      {Q(\gamma(x))}
      {\mbox{intro}~~~n \geq 0}
~~~~~~~~~~~~~~~~~~~~~~~~~~~~~~~~~~~~~~~~~~~~~~~~
&
\irule{P_1(\gamma(x))~P_2(x)~...~P_n(x)}
      {Q(x)}
      {\mbox{elim}~~~n \geq 1}
\\
\\
\irule{}
      {Q(\varepsilon)}
      {\mbox{intro}}
&
\irule{P_1(x)~...~P_n(x)} 
      {Q(x)} 
      {\mbox{neutral}~~~n \geq 0}
\end{array}$
\caption{Alternating Pushdown Systems \label{pushdown}}}}}}
\end{figure}

The second example, {\em Alternating Pushdown Systems}, is still
decidable \cite{BEM}, but a little bit more complex. Indeed
these systems, in general, 
need to be saturated---that is extended with
derivable rules---in order to enjoy cut-elimination.

Consider a language ${\cal L}$ containing a finite number of unary
predicate symbols, a finite number of unary function symbols, and a
constant $\varepsilon$.  An {\em Alternating Pushdown System} is an
inference system whose rules are like those presented in Figure
\ref{pushdown}. The rules in the first column are introduction rules
and those in the second column, the elimination and neutral rules, are
not. Elimination rules have one major premise, the leftmost one, and
all the premises of a neutral rule are major.  A cut is as in 
Definition \ref{defgeneralcut}.

Not all Alternating Pushdown Systems enjoy the cut-elimination
property.  However, every Alternating Pushdown System has an extension with
derivable rules that enjoys this property:
each time we have a cut of the form
$${\small 
  \irule{\irule{\irule{\rho^1_1}{s^1_1}{}~~...~~\irule{\rho^1_{m_1}}{s^1_{m_1}}{}}
               {s^1}
               {\mbox{intro}}
         ~~~~~~~~~~~~~~~~~~...~~~~~~~~~~
         \irule{\irule{\rho^k_1}{s^k_1}{}~~...~~\irule{\rho^k_{m_k}}{s^k_{m_k}}{}}
               {s^k}
               {\mbox{intro}}
          ~~~~~~~~~~~~~~~~~
         \irule{\pi_{k+1}}{s^{k+1}}{}
         ~~...~~
         \irule{\pi_n}{s^n}{}
        }
        {s}
        {\mbox{non-intro}}}$$
we add a derivable rule allowing to deduce directly $s$ from $s^1_1$,
..., $s^1_{m_1}$, ..., $s^k_1$, ..., $s^k_{m_k}$, $s^{k+1}$, ..., $s^n$.  
This leads to the following saturation algorithm
\cite{DJ,FruhwirthShapiroVardiYardeni,Goubault}.

\begin{definition}[Saturation]
Given an  Alternating Pushdown System,
\begin{itemize}
\item
if  it contains an introduction rule
$${\small \irule{P_1(x)~~...~~P_m(x)}{Q_1(\gamma(x))}{\mbox{intro}}}$$
and an elimination rule
$${\small \irule{Q_1(\gamma(x))~Q_2(x)~~...~~Q_n(x)}{R(x)}{\mbox{elim}}}$$
then we add the neutral rule 

$${\small \irule{P_1(x)~...~P_m(x)~Q_2(x)~~...~~Q_n(x)}{R(x)}{\mbox{neutral}}}$$

\item
if it contains introduction rules 
$${\small 
\begin{array}{ccc}
\irule{P^1_1(x)~~...~~P^1_{m_1}(x)}{Q_1(\gamma(x))}{\mbox{intro}}
&
~~~~~~~~~~~~~~~~~...~~~~~~~~~~
&
\irule{P^n_1(x)~~...~~P^n_{m_n}(x)}{Q_n(\gamma(x))}{\mbox{intro}}
\end{array}}$$
and a neutral rule 
$${\small \irule{Q_1(x)~~...~~Q_n(x)}{R(x)}{\mbox{neutral}}}$$
then we add the introduction rule 
$${\small 
  \irule{P^1_1(x)~...~P^1_{m_1}(x)~~...~~P^n_1(x)~...~P^n_{m_n}(x)}
        {R(\gamma(x))}
        {\mbox{intro}}}$$

\item
if it contains introduction rules
$${\small \begin{array}{ccc}
\irule{}{Q_1(\varepsilon)}{\mbox{intro}}
&
~~~~~~~~~~~~~~~~~...~~~~~~~~~~
&
\irule{}{Q_n(\varepsilon)}{\mbox{intro}}
\end{array}}$$
and a neutral rule 
$${\small \irule{Q_1(x)~~...~~Q_n(x)}{R(x)}{\mbox{neutral}}}$$
then we add the introduction rule 
$${\small \irule{}{R(\varepsilon)}{\mbox{intro}}}$$
\end{itemize}
As there is only a finite number of possible rules, this procedure
terminates.
\end{definition}

It is then routine to check that if a closed proposition has a proof 
in a saturated system, it has a cut-free proof \cite{DJ}, leading
to the following result.

\begin{theorem}[Decidability]
Provability of a closed proposition in an Alternating Pushdown System
is decidable.
\end{theorem}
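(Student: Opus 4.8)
The plan is to reduce decidability of provability in an arbitrary Alternating Pushdown System $\cal I$ to provability in an automaton, via the saturation algorithm. First I would observe that saturation only adds rules that are \emph{derivable}: each of the three clauses produces a new rule whose conclusion is deducible from its premises using the rules already present (an intro composed with an elim yields a neutral rule; intro rules feeding a neutral rule yield an intro rule; and similarly for the $\varepsilon$ case). Hence the saturated system $\cal I^+$ proves exactly the same closed propositions as $\cal I$: every proof in $\cal I$ is already a proof in $\cal I^+$, and conversely any use of an added rule can be expanded into a derivation using only the original rules. So it suffices to show that $\cal I^+$ has the cut-elimination property, i.e.\ that every provable closed proposition has a proof using introduction rules only.

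Next I would argue that $\cal I^+$ enjoys cut-elimination by showing that any general cut in $\cal I^+$ can be reduced, and that an appropriate measure decreases so the process terminates. The three shapes of cut correspond exactly to the three clauses of the saturation definition. Consider a cut whose last rule is an elimination rule: its major premise $s^1$ is proved by an introduction rule of the form $\irule{P_1(x)\,\cdots\,P_m(x)}{Q_1(\gamma(x))}{\mbox{intro}}$, matching the elim rule $\irule{Q_1(\gamma(x))\,Q_2(x)\,\cdots\,Q_n(x)}{R(x)}{\mbox{elim}}$; by saturation $\cal I^+$ contains the neutral rule $\irule{P_1(x)\,\cdots\,P_m(x)\,Q_2(x)\,\cdots\,Q_n(x)}{R(x)}{\mbox{neutral}}$, so I replace the cut by an application of this neutral rule to the sub-proofs $\rho^1_1,\ldots,\rho^1_m$ (for the minor premises of the intro) together with the proofs $\pi_2,\ldots,\pi_n$ of $Q_2(x),\ldots,Q_n(x)$. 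When the last rule is a neutral rule, all its premises $Q_1(x),\ldots,Q_n(x)$ are major, so each is proved by an introduction rule; these introduction rules together with the neutral rule are exactly the configuration handled by the second (or third, in the $\varepsilon$ case) clause, and $\cal I^+$ contains the corresponding introduction rule with conclusion $R(\gamma(x))$ (resp.\ $R(\varepsilon)$), which I apply to the concatenation of the minor sub-proofs. In every case the new proof has strictly fewer nodes than the cut it replaces, since the reduction deletes at least the two rule applications that formed the redex. Therefore repeatedly reducing an innermost cut terminates, and by Lemma~\ref{key} the resulting cut-free proof is built from introduction rules only.

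Once $\cal I^+$ has the cut-elimination property, the remark following Lemma~\ref{key} applies: the automaton $\cal A$ obtained by dropping the non-introduction rules of $\cal I^+$ is equivalent to $\cal I^+$ with respect to provability of closed propositions. Here I must check that $\cal A$ is indeed an automaton in the sense of Definition~\ref{defautomaton}, i.e.\ finitely branching: all introduction rules of an Alternating Pushdown System have a conclusion of the form $Q(\gamma(x))$ or $Q(\varepsilon)$ with premises determined (up to the fixed finite list of predicate symbols) by $Q$ and $\gamma$, so for a given closed conclusion only finitely many premise-lists are possible, and saturation preserves this shape. Provability in a finitely branching automaton is decidable---one searches for a proof top-down, the well-foundedness of $\prec$ (the size order) bounding the depth---so provability of a closed proposition in $\cal A$, hence in $\cal I^+$, hence in $\cal I$, is decidable.

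The main obstacle is the termination of saturation together with the verification that the set of candidate rules is finite: one must see that every rule produced has the prescribed syntactic form over the fixed finite signature (finitely many unary predicate symbols and finitely many unary function symbols), so that only finitely many distinct intro, elim, and neutral rules can ever be generated and the fixpoint is reached after finitely many rounds. The cut-reduction steps themselves are a routine case analysis once the right measure (proof size) is fixed; the delicate point is simply making sure the three saturation clauses cover \emph{every} possible shape of general cut in the saturated system, which is exactly why saturation is iterated to a fixpoint rather than applied once.
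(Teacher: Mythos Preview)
Your approach is exactly the paper's: saturate, observe that saturation adds only derivable rules so provability is preserved, show the saturated system enjoys cut-elimination, and conclude by passing to the automaton of its introduction rules. The paper itself is terse here and defers the cut-elimination argument to \cite{DJ}; you spell it out, and your case analysis matching the three saturation clauses is the right one.

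There is one small slip in your termination measure. Neutral rules are allowed to have zero premises ($n\geq 0$), and a zero-premise neutral rule applied to a closed term is already a general cut in the sense of Definition~\ref{defgeneralcut}: the condition that all major premises be proved by introduction rules is vacuous. Your reduction replaces this one-node proof by the corresponding zero-premise introduction rule that saturation provides, but then the node count does not drop, contradicting your claim that ``the reduction deletes at least the two rule applications that formed the redex.'' The fix is immediate---take instead the number of non-introduction rule occurrences (which strictly decreases in every case), or order lexicographically by that count and then size---so the overall argument stands.
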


\VL{

\begin{proof}
Consider an Alternating Pushdown System $S$, the system 
$S'$ obtained by saturating $S$, and the automaton
$S''$ obtained by dropping all the non-introduction rules in $S'$. 
These three systems prove the same closed propositions and 
provability in $S''$ is decidable.
\end{proof}

}

\begin{example}
Consider the Alternating Pushdown System $S$
$${\small \begin{array}{llll}
\irule{Q(x)}{P(a x)}{\mbox{\bf i1}}
~~~~~~~~~~~~~~
&
\irule{T(x)}{P(b x)}{\mbox{\bf i2}}
~~~~~~~~~~~~~~
&
\irule{T(x)}{R(a x)}{\mbox{\bf i3}}
~~~~~~~~~~~~~~
&
\irule{}{R(b x)}{\mbox{\bf i4}}
\\
\\
\irule{P(x)~R(x)}{Q(x)}{\mbox{\bf n1}}
~~~~~~~~~~~~~~
&
\irule{}{T(x)}{\mbox{\bf n2}}
&
\irule{P(ax)}{S(x)}{\mbox{\bf e1}}
\end{array}}$$
The system $S'$ obtained by saturating the system $S$ 
contains the rules of the system $S$ and the following rules 
$${\small 
\begin{array}{llll}
\irule{Q(x)}{S(x)}{\mbox{\bf n3}}
~~~~~~~~~~~~~~
& 
\irule{}{T(\varepsilon)}{\mbox{\bf i5}}
~~~~~~~~~~~~~~
&

\irule{}{T(a x)}{\mbox{\bf i6}}
~~~~~~~~~~~~~~
&
\irule{Q(x)~T(x)}{Q(ax)}{\mbox{\bf i7}}
\\
\\
\irule{Q(x)~T(x)}{S(ax)}{\mbox{\bf i8}}
~~~~~~~~~~~~~~
&
\irule{}{T(b x)}{\mbox{\bf i9}}
~~~~~~~~~~~~~~
&
\irule{T(x)}{Q(bx)}{\mbox{\bf i10}}
~~~~~~~~~~~~~~
&
\irule{T(x)}{S(bx)}{\mbox{\bf i11}}
\end{array}}$$
The automaton $S''$ contain the rules 
{\bf i1}, {\bf i2}, {\bf i3}, {\bf i4}, 
{\bf i5}, {\bf i6}, {\bf i7}, {\bf i8}, 
{\bf i9}, {\bf i10}, {\bf i11}. 

The proof in the system $S$
$${\small 
  \irule{\irule{\irule{\irule{\irule{\irule{\irule{}
                                                  {T(\varepsilon)}
                                                  {\mbox{\bf n2}}
                                           }
                                           {P(b)}
                                           {\mbox{\bf i2}}
                                     ~~~~~~~~~~~~~~~
                                     \irule{}
                                           {R(b)}
                                           {\mbox{\bf i4}}
                                    }
                                    {Q(b)}
                                    {\mbox{\bf n1}}
                             }
                             {P(ab)}
                             {\mbox{\bf i1}}
                      ~~~~~~~~~~~~~~~~~~~~~~~
                      \irule{\irule{}{T(b)}{\mbox{\bf n2}}}
                            {R(ab)}
                            {\mbox{\bf i3}}
                      }
                      {Q(ab)}
                      {\mbox{\bf n1}}
               }
               {P(aab)}
               {\mbox{\bf i1}}
         }
         {S(ab)}
         {\mbox{\bf e1}}}$$
reduces to the cut-free proof in the system $S''$
$${\small 
  \irule{\irule{\irule{}{T(\varepsilon)}{\mbox{\bf i5}}}
               {Q(b)}
               {\mbox{\bf i10}}
        ~~~~~~~~~~
        \irule{}{T(b)}{\mbox{\bf i9}}
        }
        {S(ab)}
        {\mbox{\bf i8}}}$$
\end{example}

\section{Partial results for undecidable systems}
\label{secpartial}

In this section, we focus on Constructive Predicate Logic, leaving the
case of Classical Predicate Logic for future work.  We start with
Natural Deduction \cite{Prawitz}. As provability in Predicate Logic is
undecidable, we cannot expect to transform Natural Deduction into an
automaton. But, as we shall see, saturation permits to transform first
Natural Deduction into a Gentzen style Sequent Calculus \cite{Kleene},
then the latter into a Kleene style Sequent Calculus \cite{Kleene},
and then the latter into a Vorob'ev-Hudelmaier-Dyckhoff-Negri style
Sequent Calculus \cite{Vorobev,Hudelmaier,Dyckhoff,DyckhoffNegri}.
Each time, a larger fragment of Constructive Predicate Logic is proved 
decidable.

Note that each transformation proceeds in the same way: 
first, we identify some general cuts.
Then, like in the saturation procedure of Section \ref{aps},
we add some admissible rules to eliminate these cuts. 
Finally, we prove a cut-elimination theorem showing that some
non-introduction rules can be dropped.

\subsection{Natural Deduction}
\label{secND}

\begin{figure}[t]
{\scriptsize
\noindent\framebox{\parbox{\textwidth
}{
{
~~~~~$\begin{array}{lr}
\irule{}
      {\Gamma, A \vdash A}
      {\mbox{axiom}}
~~~~~~~~~~~~~~~~~~~~~~~~~~~~~~~~~~~~~~~~~~~~~~
\\
&
\\
\irule{}
      {\Gamma \vdash \top}
      {\mbox{$\top$-intro}}
&
\\
&
\irule{\Gamma \vdash \bot}
      {\Gamma \vdash A}
      {\mbox{$\bot$-elim}}
\\
\\
\irule{\Gamma \vdash A~~~\Gamma \vdash B}
      {\Gamma \vdash A \wedge B}
      {\mbox{$\wedge$-intro}}
&
\irule{\Gamma \vdash A \wedge B}
      {\Gamma \vdash A}
      {\mbox{$\wedge$-elim}}
\\
&
\irule{\Gamma \vdash A \wedge B}
      {\Gamma \vdash B}
      {\mbox{$\wedge$-elim}}
\\
\irule{\Gamma \vdash A}
      {\Gamma \vdash A \vee B}
      {\mbox{$\vee$-intro}}
\\
\\
\irule{\Gamma \vdash B}
      {\Gamma \vdash A \vee B}
      {\mbox{$\vee$-intro}}
&
\irule{\Gamma \vdash A \vee B~~~\Gamma, A \vdash C~~~\Gamma, B \vdash C}
      {\Gamma \vdash C}
      {\mbox{$\vee$-elim}}
\\
\\
\irule{\Gamma, A \vdash B}
      {\Gamma \vdash A \Rightarrow B}
      {\mbox{$\Rightarrow$-intro}}
&
\irule{\Gamma \vdash A \Rightarrow B~~~\Gamma \vdash A}
      {\Gamma \vdash B}
      {\mbox{$\Rightarrow$-elim}}
\\
\\
\irule{\Gamma \vdash A}
      {\Gamma \vdash \fa x A}
      {\mbox{$\fa$-intro if $x$ not free in $\Gamma$}}
&
\irule{\Gamma \vdash \fa x A}
      {\Gamma \vdash (t/x)A}
      {\mbox{$\fa$-elim}}
\\
\\
\irule{\Gamma \vdash (t/x)A}
      {\Gamma \vdash \ex x~A}
      {\mbox{$\ex$-intro}}
&
\irule{\Gamma \vdash \ex x~A~~~\Gamma, A \vdash B}
      {\Gamma \vdash B}
      {\mbox{$\ex$-elim if $x$ not free in $\Gamma, B$}}\\
\end{array}$
\begin{center}
\caption{Constructive Natural Deduction\label{natded}}
\end{center}
}}}
}
\end{figure}

In Natural Deduction (Figure \ref{natded}), the introduction rules are
those presented in the first column, they are the axiom rule and the
rules $\top$-intro, $\wedge$-intro, $\vee$-intro, $\Rightarrow$-intro,
$\fa$-intro, and $\ex$-intro.  The non-introduction rules are those
presented in the second column, each of them has one major premise:
the leftmost one.

Natural Deduction has a specific notion of cut: a proof ending
with a $\wedge$-elim, $\vee$-elim, $\Rightarrow$-elim, $\fa$-elim,
$\ex$-elim rule, whose major premise is proved with a proof ending with
a $\wedge$-intro, $\vee$-intro, $\Rightarrow$-intro, $\fa$-intro,
$\ex$-intro rule, respectively.  
The only difference between this specific notion of cut and the general one 
(Definition \ref{defgeneralcut}) is that the general notion has one more 
form of cut: a proof built with an elimination rule whose major premise is 
proved with the axiom rule. For instance
$${\small 
  \irule{\irule{}
               {P \wedge Q \vdash P \wedge Q}
               {\mbox{axiom}}
        }
        {P \wedge Q \vdash P}
        {\mbox{$\wedge$-elim}}}$$
So proofs free of specific cuts can still contain general cuts of this form.

Saturating the system, like in Section \ref{aps}, to eliminate the 
specific cuts, would add \VL{the }derivable rules \VC{such as}
$${\small 
  \irule{\Gamma \vdash A~~~~~\Gamma \vdash B}
        {\Gamma \vdash A}
        {R_\wedge}}$$
\VL{
$${\small 
  \irule{\Gamma \vdash A~~~~~\Gamma \vdash B}
        {\Gamma \vdash B}
        {R_\wedge}}$$
$${\small 
  \irule{\Gamma \vdash A~~~\Gamma, A \vdash C~~~\Gamma, B \vdash C}
        {\Gamma \vdash C}
        {R_\vee}}$$
$${\small 
  \irule{\Gamma \vdash B~~~\Gamma, A \vdash C~~~\Gamma, B \vdash C}
        {\Gamma \vdash C}
        {R_\vee}}$$
$${\small
  \irule{\Gamma, A \vdash B~~~\Gamma \vdash A}
        {\Gamma \vdash B}
        {R_\Rightarrow}}$$
$${\small 
  \irule{\Gamma \vdash A}
        {\Gamma \vdash (t/x)A}
        {\mbox{$R_\fa$ ~ $x$ not free in $\Gamma$}}}$$
$${\small 
  \irule{\Gamma \vdash (t/x)A~~~~\Gamma, A \vdash C}
        {\Gamma \vdash C}
        {\mbox{$R_\ex$ ~ $x$ not free in $\Gamma$ and $C$}}}$$
}
But they are not needed, as they are admissible in cut-free Natural 
Deduction.

\VL{
For
instance, instead of building the proof
$${\small 
  \irule{\irule{\pi_1}{\Gamma \vdash A}{}
         ~~~~~
         \irule{\pi_2}{\Gamma \vdash B}{}
        }
        {\Gamma \vdash A}
        {R_\wedge}}$$
we can just build the proof $\pi_1$. 
The rule $R_\fa$ is just a substitution rule and instead of 
building the proof 
$${\small  
  \irule{\irule{\pi}
               {\Gamma \vdash A}
               {}
        }
        {\Gamma \vdash (t/x)A}
        {R_\fa}}$$
we can just build the proof $(t/x)\pi$, where the term $t$ 
is substituted for $x$ everywhere in $\pi$. 
The rule $R_\Rightarrow$ is just a substitution rule and instead of 
building the proof 
$${\small 
  \irule{\irule{\pi_1}{\Gamma, A \vdash B}{}
         ~~~
         \irule{\pi_2}{\Gamma \vdash A}{}
        }
        {\Gamma \vdash B}
        {R_\Rightarrow}}$$
we can just build the proof $(\pi_2/\langle A \rangle)\pi_1$, where the
proof $\pi_2$ is substituted for the axiom rule on $A$ everywhere in
$\pi_1$, etc.}

\VC{The admissibility of some rules 
however are based on a substitution of proofs, that may create new cuts
on smaller propositions, that need in turn to be eliminated. 
In other words, the termination of the specific cut-elimination algorithm 
needs to be proved \cite{Prawitz}.}

\VL{
In the case of the rule $R_\Rightarrow$ however, the substitution $(\pi_2/\langle A \rangle)\pi_1$
may create cuts on smaller propositions, so we need to transform this proof 
further into a cut-free proof, that is to prove the termination 
of the cut-elimination algorithm \cite{Prawitz}.}

As general cuts with an axiom rule are not eliminated, this partial
cut-elimination theorem is not sufficient to eliminate all
elimination rules and to prove the decidability of Constructive
Natural Deduction, but it yields a weaker result: a (specific-)cut-free proof
ends with introduction rules, as long as the context of the proved
sequent contains atomic propositions only.  To formalize this result,
\VL{let us}\VC{we} introduce a modality $[~]$ and define a translation
that freezes the non atomic left-hand parts of 
implications\VL{.}\VC{, 
$f(A \Rightarrow B) = [A] \Rightarrow f(B)$, if $A$ is not
  atomic, and 
$f(A \Rightarrow B) = A \Rightarrow f(B)$, if $A$ is atomic, 
$f(A \wedge B) = f(A) \wedge f(B)$, etc., and the converse function 
$u$ is defined in a trivial way.}

\VL{
\begin{definition}[Freeze, unfreeze]
The freeze function $f$ is defined as follows
\begin{itemize}
\item $f(A \Rightarrow B) = [A] \Rightarrow f(B)$, if $A$ is not
  atomic,
\item $f(P \Rightarrow B) = P \Rightarrow f(B)$, if $P$ is atomic,
\item $f(P) = P$, if $P$ is atomic,
\item $f(\top) = \top$, $f(\bot) = \bot$,
\item $f(A \wedge B) = f(A) \wedge f(B)$, $f(A \vee B) = f(A) \vee
  f(B)$,
\item $f(\fa x A) = \fa x f(A)$, $f(\ex x~A) = \ex x~f(A)$.
\end{itemize}

The converse unfreeze function $u$ is defined as follows

\begin{itemize}
\item $u([A]) = u(A)$,
\item $u(P) = P$, if $P$ is atomic,
\item $u(\top) = \top$, $u(\bot) = \bot$,
\item $u(A \Rightarrow B) = u(A) \Rightarrow u(B)$,
\item $u(A \wedge B) = u(A) \wedge u(B)$, $u(A \vee B) = u(A) \vee
  u(B)$,
\item $u(\fa x A) = \fa x u(A)$, $u(\ex x~A) = \ex x~u(A)$.
\end{itemize}

These definitions extend to sequences of propositions and to
sequents in a trivial way.
\end{definition}
}

\begin{definition}
Let ${\cal A}$ be the pseudo-automaton formed with the introduction rules
of Constructive Natural Deduction, including the axiom rule, plus the
introduction rule
$${\small \irule{} {\Gamma, [A] \vdash B} {\mbox{delay}}}$$
\end{definition}

\VL{

We want to prove that if a proposition $A$ has a (specific-)cut-free proof in
Constructive Natural Deduction, then the proposition $f(A)$ has a
proof in ${\cal A}$ and for each leaf $\Delta \vdash B$ proved with
the delay rule, the sequent $u(\Delta \vdash B)$ has a proof in
Constructive Natural Deduction.  To do so, we first need the following
lemma.

\begin{lemma}[Last rule]
\label{lemmalastrule}
If $\pi$ is a (specific-)cut-free proof of $\Gamma \vdash A$ in Constructive
Natural Deduction, where $\Gamma$ contains atomic propositions only,
then $\pi$ ends with an introduction rule.
\end{lemma}

\begin{proof}
By induction over proof structure. The proof $\pi$ has the form
$${\small 
  \irule{\irule{\pi_1}{\Gamma_1 \vdash B_1}{}
         ~~~~
         \irule{\pi_2}{\Gamma_2 \vdash B_2}{}
         ~~~~...~~~~
         \irule{\pi_p}{\Gamma_p \vdash B_p}{}\
        }
        {\Gamma \vdash A}
        {R}}$$
Suppose the rule $R$ is an elimination rule. Then $\Gamma_1 = \Gamma$
and, by induction hypothesis, the last rule of $\pi_1$ is an
introduction rule.  If it is an axiom rule, then the proposition $B_1$
is atomic, contradicting the assumption that $R$ is an elimination
rule. Thus, it is one of the rules $\top$-intro, $\wedge$-intro,
$\vee$-intro, $\fa$-intro, $\ex$-intro, or $\Rightarrow$-intro,
contradicting the fact that the proof is (specific-)cut-free.
\end{proof}

Note that, in this lemma, we do not need to consider commutative cuts,
because the context $\Gamma$ contains atomic propositions only.

}

\begin{theorem}
\label{mainND}
Let $\Gamma \vdash A$ be a sequent such that $\Gamma$ contains atomic
propositions only. If $\Gamma \vdash A$ has a (specific-)cut-free proof in
Constructive Natural Deduction, then $\Gamma \vdash f(A)$ has a proof
in the pseudo-automaton ${\cal A}$ and for each leaf $\Delta \vdash B$
proved with the delay rule, the sequent $u(\Delta \vdash B)$ has a
proof in Constructive Natural Deduction.
\end{theorem}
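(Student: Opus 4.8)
The plan is to proceed by induction over the structure of the given (specific-)cut-free proof $\pi$ of $\Gamma \vdash A$, strengthening the statement slightly so that the induction goes through: for any (specific-)cut-free proof $\pi$ of a sequent $\Gamma \vdash A$ whose context $\Gamma$ consists of atomic propositions only, (i) there is a proof of $\Gamma \vdash f(A)$ in ${\cal A}$, and (ii) for each leaf $\Delta \vdash B$ of that ${\cal A}$-proof that is closed with the \texttt{delay} rule, the sequent $u(\Delta \vdash B)$ is provable in Constructive Natural Deduction. By Lemma~\ref{lemmalastrule}, since $\Gamma$ is atomic, $\pi$ ends with an introduction rule or with the \texttt{axiom} rule. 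The \texttt{axiom} and $\top$-\texttt{intro} cases are immediate, since $f$ fixes atomic propositions and $\top$, and these rules are themselves in ${\cal A}$ with no \texttt{delay} leaves.

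First I would treat the introduction rules other than $\Rightarrow$-\texttt{intro}. For $\wedge$-\texttt{intro}, $\vee$-\texttt{intro}, $\fa$-\texttt{intro}, $\ex$-\texttt{intro}, the premises have the same context $\Gamma$ (still atomic; for $\fa$-\texttt{intro} we may need to rename the bound variable away from $\Gamma$, which is harmless since $\Gamma$ is atomic and in fact we can assume $\Gamma$ closed or variables disjoint). Here $f$ commutes with the connective — e.g.\ $f(A\wedge B)=f(A)\wedge f(B)$, $f(\fa x A)=\fa x\, f(A)$ — so applying the induction hypothesis to the sub-proofs and then the corresponding introduction rule of ${\cal A}$ gives the desired proof of $\Gamma\vdash f(A)$, and the \texttt{delay} leaves are simply the (disjoint) union of those produced by the sub-proofs, each already handled by clause (ii) of the induction hypothesis.

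The interesting case is $\Rightarrow$-\texttt{intro}, where the proof ends with
$${\small \irule{\irule{\pi'}{\Gamma, A \vdash B}{}}{\Gamma \vdash A \Rightarrow B}{\mbox{$\Rightarrow$-intro}}}$$
Here there are two subcases according to the definition of $f$. If $A$ is atomic, then $\Gamma, A$ is still a purely atomic context, the induction hypothesis applies to $\pi'$ giving a proof of $\Gamma, A \vdash f(B)$ in ${\cal A}$, and since $f(A \Rightarrow B) = A \Rightarrow f(B)$ we close with $\Rightarrow$-\texttt{intro}; the \texttt{delay} leaves are unchanged. If $A$ is \emph{not} atomic, then we cannot recurse into $\pi'$ because the context $\Gamma, A$ is no longer atomic; instead we use the \texttt{delay} rule directly, taking the one-node proof
$${\small \irule{}{\Gamma, [A] \vdash f(B)}{\mbox{delay}}}$$
and closing with $\Rightarrow$-\texttt{intro} to obtain $\Gamma \vdash [A] \Rightarrow f(B) = f(A \Rightarrow B)$ in ${\cal A}$. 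For clause (ii) we must exhibit a Constructive Natural Deduction proof of $u(\Gamma, [A] \vdash f(B)) = u(\Gamma), u(A) \vdash u(f(B))$; since $\Gamma$ is atomic $u(\Gamma)=\Gamma$, and the key auxiliary fact is that $u \circ f$ is the identity on propositions (an easy structural induction on the clauses of $f$ and $u$), so this sequent is just $\Gamma, A \vdash B$, which is proved by $\pi'$ itself.

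The main obstacle — and the reason for carefully stating clause (ii) in the induction — is exactly this non-atomic $\Rightarrow$-\texttt{intro} case: one must make sure that (a) the \texttt{delay} leaf is produced with the frozen hypothesis $[A]$ and the \emph{frozen} conclusion $f(B)$, so that ${\cal A}$ can in principle be applied to recognize it later, and (b) the unfreezing $u$ exactly undoes $f$, so that the "debt" recorded at the \texttt{delay} leaf is discharged by the original Natural Deduction sub-proof. A minor technical point to verify along the way is that Lemma~\ref{lemmalastrule} genuinely rules out elimination rules at the root (so we never have to invent anything for $\wedge$-\texttt{elim}, $\Rightarrow$-\texttt{elim}, etc.), which is where cut-freeness and atomicity of $\Gamma$ are both used. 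With the lemma $u\circ f=\mathrm{id}$ and Lemma~\ref{lemmalastrule} in hand, every case is routine, and the induction closes.
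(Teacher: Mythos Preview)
Your proposal is correct and follows essentially the same approach as the paper: induction on the structure of the (specific-)cut-free proof, invoking Lemma~\ref{lemmalastrule} to ensure the last rule is an introduction, recursing into the premises when the context remains atomic, and using the \texttt{delay} rule in the $\Rightarrow$-\texttt{intro} case with non-atomic antecedent. You are in fact more explicit than the paper on two points---separating the atomic/non-atomic antecedent subcases of $\Rightarrow$-\texttt{intro}, and isolating the auxiliary fact $u\circ f=\mathrm{id}$ needed to discharge clause~(ii) at \texttt{delay} leaves---both of which the paper leaves implicit.
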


\VL{

\begin{proof}
A (specific-)cut-free proof of $\Gamma \vdash A$ has the form
$${\small 
  \irule{\irule{\pi_1}{\Gamma_1 \vdash B_1}{}
         ~~~
         \irule{\pi_2}{\Gamma_2 \vdash B_2}{}
         ~~~...~~~
         \irule{\pi_p}{\Gamma_p \vdash B_p}{}\
        }
        {\Gamma \vdash A}
        {R}}$$
By Lemma \ref{lemmalastrule}, the rule $R$ is an introduction. 
If $A = (C \Rightarrow D)$ and $C$ is not atomic, then $\pi_1$ is a
proof of $\Gamma, C \vdash D$, $f(A) = [C] \Rightarrow f(D)$, we have
the following proof of $\Gamma \vdash f(A)$ in the pseudo-automaton ${\cal A}$
$${\small 
  \irule{\irule{}
               {\Gamma, [C] \vdash f(D)}
               {\mbox{delay}}
        }
        {\Gamma \vdash [C] \Rightarrow f(D)}
        {\mbox{$\Rightarrow$-intro}}}$$
and $\pi_1$ is a proof of the sequent $u(\Gamma, [C] \vdash f(D))$. 

Otherwise, $\Gamma_1 = \Gamma$ which contains atomic propositions
only.  The other contexts $\Gamma_2$, ..., $\Gamma_p$ (if any) are all
equal to $\Gamma$, so they all contain atomic propositions only and we
apply the induction hypothesis.
\end{proof}

}

\VL{

\begin{example}
Let $A$ be the proposition $(P \Rightarrow (P \Rightarrow P)) \wedge
((P \wedge P) \Rightarrow P)$ with $P$ atomic.  It is easy to check that 
$\vdash A$ has a (specific-)cut-free proof in Constructive Natural 
Deduction.  The sequent $\vdash f(A)$ has the following proof in 
the pseudo-automaton ${\cal A}$
$${\small 
  \irule{\irule{\irule{\irule{}
                             {P, P \vdash P}
                             {\mbox{axiom}}
                      }
                      {P \vdash P \Rightarrow P}
                      {\mbox{$\Rightarrow$-intro}}
               }
               {\vdash P \Rightarrow (P \Rightarrow P)}
               {\mbox{$\Rightarrow$-intro}}
         ~~~~~~~~~~~~~~~~~~~~~
         \irule{\irule{}
                      {[P \wedge P] \vdash P}
                      {\mbox{delay}}
               }
               {\vdash [P \wedge P] \Rightarrow P}
               {\mbox{$\Rightarrow$-intro}}
        }
        {\vdash (P \Rightarrow (P \Rightarrow P)) \wedge ([P \wedge P] \Rightarrow P)}
        {\mbox{$\wedge$-intro}}}$$
and the sequent $u([P \wedge P] \vdash P) = (P \wedge P \vdash P)$ has
the following proof in Constructive Natural Deduction
$${\small 
  \irule{\irule{}
               {P \wedge P \vdash P \wedge P}
               {\mbox{axiom}}
        }
        {P \wedge P \vdash P}
        {\mbox{$\wedge$-elim}}}$$
\end{example}

}

A first corollary of Theorem \ref{mainND} is the decidability of the
small fragment
$${\small A = P~|~\top~|~\bot~|~A \wedge A~|~A \vee A~|~P \Rightarrow A~|~\fa x~A~|~\ex x~A}$$
where the left-hand side of an implication is always atomic, that is
no connective or quantifier has a negative occurrence.  
\VC{As the pseudo-automaton obtained this way is not finitely
branching, we need, as well-known, to introduce meta-variables to 
prove this decidability result.}

\VL{
Indeed, if $A$ is a proposition in this fragment, then $f(A) = A$ and,
if $\vdash A$ has a proof, then it has a proof in the pseudo-automaton
${\cal A}$.  As ${\cal A}$ is a pseudo-automaton, but not a finitely
branching one, this is not sufficient to prove the decidability of
this fragment.  Nevertheless, we can prove this decidability by using
a well-known method: introducing meta-variables
$${\small 
  \irule{\Gamma \vdash (X/x)A}
        {\Gamma \vdash \ex x~A}
        {\mbox{$\ex$-intro}}}$$
and, in the axiom rule, replacing the comparison of propositions with
global unification.
}

A second corollary is that if $A$ is a proposition starting with
$n$ connectors or quantifiers different from $\Rightarrow$, then a
(specific-)cut-free proof of the sequent $\vdash A$ ends with $n+1$
successive introduction rules.  For $n = 0$, we obtain the
well-known last rule property of constructive (specific-)cut-free
proofs\VL{ (Lemma \ref{lemmalastrule})}. For a proposition $A$ of the
form $\fa x ~(B_1 \vee B_2)$, for instance, we obtain that a
(specific-)cut-free proof of $\vdash \fa x ~(B_1 \vee B_2)$ ends with
three introduction rules. Thus, it has the form
$${\small 
  \irule{\irule{\irule{\pi'}
                      {\vdash B_i}
                      {}
               }
               {\vdash B_1 \vee B_2}
               {\mbox{$\vee$-intro}}
        }
        {\vdash \fa x~(B_1 \vee B_2)}
        {\mbox{$\fa$-intro}}}$$
and $\pi'$ itself ends with an introduction rule.  As a consequence, 
if the proposition $\fa x~(B_1 \vee B_2)$ has a
proof, then either the proposition $B_1$ or the proposition $B_2$ has a
proof, thus the proposition $(\fa x~ B_1) \vee (\fa x ~B_2)$
has a proof. This commutation of the universal quantifier with the
disjunction is called a {\em shocking equality} \cite{Girard}.

\subsection{Eliminating elimination rules: Gentzen style Sequent Calculus}

To eliminate the general cuts of the form 
$${\small 
  \irule{\irule{}
               {A \wedge B \vdash A \wedge B}
               {\mbox{axiom}}
        }
        {A \wedge B \vdash A}
        {\mbox{$\wedge$-elim}}}$$
we could add an introduction rule of the form 
$${\small 
  \irule{}
        {A \wedge B \vdash A}
        {I}}$$
But, this saturation procedure would not terminate. 
\VL{For instance, the proof 
$${\small
  \irule{\irule{\irule{}
                      {(A \wedge B) \wedge C \vdash (A \wedge B) \wedge C}
                      {\mbox{axiom}}
               }
               {(A \wedge B) \wedge C \vdash A \wedge B}
               {\mbox{$\wedge$-elim}}
       }
       {(A \wedge B) \wedge C \vdash A}
       {\mbox{$\wedge$-elim}}}$$
would reduce to 
$${\small 
  \irule{\irule{}
               {(A \wedge B) \wedge C \vdash A \wedge B}
               {I}
       }
       {(A \wedge B) \wedge C \vdash A}
       {\mbox{$\wedge$-elim}}}$$
that contains a general cut formed with the rule $I$ and the
$\wedge$-elim rule. And, to eliminate this cut, we would need yet
another introduction rule
$${\small 
  \irule{}
        {(A \wedge B) \wedge C \vdash A}
        {I'}}$$
and so on.
}

A way to keep the number of rules finite is to add left introduction rules 
to decompose \VL{these complex propositions}\VC{the complex hypotheses}, 
before they are used by the axiom 
rule\VC{: the left rules of Sequent Calculus.}\VL{, for example
$${\small 
  \irule{\irule{\irule{}
                      {A, B, C \vdash A}
                      {\mbox{axiom}}
               }
               {A \wedge B, C \vdash A}
               {\mbox{$\wedge$-left}}
       }
       {(A \wedge B) \wedge C \vdash A}
       {\mbox{$\wedge$-left}}}$$
The introduction rules of Natural Deduction are then called 
{\em right introduction rules}.

}
However, this is still not sufficient to eliminate the elimination rules of
Constructive Natural Deduction.  For instance, the sequent $\fa x~
(P(x) \wedge (P(f(x)) \Rightarrow Q)) \vdash Q$ has a proof using 
elimination rules
$$\hspace*{-1.5cm} {\small 
  \irule{\irule{\irule{\irule{}
                             {\Gamma \vdash \fa x~ (P(x) \wedge (P(f(x)) \Rightarrow Q))}
                             {\mbox{axiom}}
                      }
                      {\Gamma \vdash P(c) \wedge (P(f(c)) \Rightarrow Q)}
                      {\mbox{$\fa$-elim}}
               }
               {\Gamma \vdash P(f(c)) \Rightarrow Q}
               {\mbox{$\wedge$-elim}}
          ~~~~~~~~~~~~~~~~~~~~~~~~~~~~~~~~~~~~~
         \irule{\irule{\irule{}
                             {\Gamma \vdash \fa x ~ (P(x) \wedge (P(f(x)) \Rightarrow Q))}
                             {\mbox{axiom}}
                      }
                      {\Gamma \vdash P(f(c)) \wedge (P(f(f(c))) \Rightarrow Q)}
                      {\mbox{$\fa$-elim}}
               }
               {\Gamma \vdash P(f(c))}
               {\mbox{$\wedge$-elim}}
        }
        {\Gamma \vdash Q}
        {\mbox{$\Rightarrow$-elim}}}$$
where $\Gamma = \fa x~(P(x) \wedge (P(f(x)) \Rightarrow Q))$, but none
using introduction rules only\VL{, as from the sequent
$\fa x~(P(x) \wedge (P(f(x)) \Rightarrow Q)) \vdash Q$, the only
introduction rule that can be applied is the $\fa$-left rule and, for
every term $t$, the sequent $P(t) \wedge (P(f(t)) \Rightarrow Q)
\vdash Q$ is not provable}.

\begin{figure}[t]
{\scriptsize
\noindent\framebox{\parbox{\textwidth
}{
{
~~~~~~~~~$
\begin{array}{lr}
\irule{}{\Gamma, P \vdash P}{\mbox{axiom $P$ atomic}}
~~~~~~~~~~~~~~~~~~~~~~~~~~~~~~~~~~~~~~~~~~
&
\irule{\Gamma, A, A \vdash G}
      {\Gamma, A \vdash G}
      {\mbox{contraction}}
\\
\irule{}
      {\Gamma \vdash \top}
      {\mbox{$\top$-right}}\\
\irule{}
      {\Gamma, \bot \vdash G}
      {\mbox{$\bot$-left}}\\
\irule{\Gamma, A, B \vdash G}
      {\Gamma, A \wedge B \vdash  G}
      {\mbox{$\wedge$-left}}\\
\irule{\Gamma \vdash A~~~\Gamma \vdash B}
      {\Gamma \vdash A \wedge B}
      {\mbox{$\wedge$-right}}\\
\irule{\Gamma, A \vdash G ~~~ \Gamma, B \vdash G}
      {\Gamma, A \vee B \vdash G}
      {\mbox{$\vee$-left}}\\
\irule{\Gamma \vdash A}
      {\Gamma \vdash A \vee B}
      {\mbox{$\vee$-right}}\\
\irule{\Gamma \vdash B}
      {\Gamma \vdash A \vee B}
      {\mbox{$\vee$-right}}\\
\irule{\Gamma \vdash A ~~~ \Gamma, B \vdash G}
      {\Gamma, A \Rightarrow B \vdash  G}
      {\mbox{$\Rightarrow$-left}}\\
\irule{\Gamma, A \vdash B}
      {\Gamma \vdash  A \Rightarrow B}
      {\mbox{$\Rightarrow$-right}}\\
\irule{\Gamma, (t/x)A \vdash G}
      {\Gamma, \fa x A \vdash G}
      {\mbox{$\fa$-left}}\\
\irule{\Gamma \vdash A}
      {\Gamma \vdash \fa x A}
      {\mbox{$\fa$-right if $x$ not free in $\Gamma$}}\\
\irule{\Gamma, A \vdash G}
      {\Gamma, \ex x~A \vdash G}
      {\mbox{$\ex$-left if $x$ not free in $\Gamma, G$}}\\
\irule{\Gamma \vdash (t/x)A}
      {\Gamma \vdash \ex x~A}
      {\mbox{$\ex$-right}}
\end{array}$
\begin{center}
\caption{Gentzen style Sequent Calculus: the system $\cal G$
\label{constrsequent}}
\end{center}}}}
}
\end{figure}

So, we need to add a contraction rule, to use an hypothesis several times
$${\small \irule{\Gamma, A, A \vdash G} {\Gamma, A \vdash G}{\mbox{contraction}}}$$
To prove that the elimination rules of Natural Deduction can now be eliminated, we prove\VC{, using Gentzen's theorem \cite{GirardLafontTaylor},}
that 
they are admissible in the 
system $\cal G$
(Figure \ref{constrsequent}), the Gentzen style Sequent Calculus, obtained 
by dropping the elimination rules of Constructive Natural Deduction. In 
this system, all the rules are 
introduction rules, except the contraction rule.
\VL{Note that the axiom 
rule is restricted to atomic propositions, as it is well-known that this 
restriction is immaterial.}
\VL{In order to prove that these elimination rules are admissible in the
system $\cal G$, we first need the following lemma (see, for instance,
\cite{GirardLafontTaylor}).

\begin{theorem}[Gentzen]
\label{Gentzen}
In the system $\cal G$,
the cut rule is admissible, that is,
if $\Gamma \vdash C$ and $\Gamma, C \vdash A$ both have proofs, then so does $\Gamma \vdash A$.
\end{theorem}

Then, the admissibility of the elimination rules is a simple
corollary.  From Theorem \ref{Gentzen}, if $\Gamma \vdash A
\wedge B$ and $\Gamma, A \wedge B \vdash A$ both have proofs,  then so does
$\Gamma \vdash A$.  But as the sequent $\Gamma, A \wedge B \vdash A$
has the trivial proof
$${\small 
  \irule{\irule{}
               {\Gamma, A, B \vdash A}
               {\mbox{axiom}}
        }
        {\Gamma, A \wedge B \vdash A}
        {\mbox{$\wedge$-left}}}$$
we obtain that if $\Gamma \vdash A \wedge B$ has a proof, then so does
$\Gamma \vdash A$, which is the admissibility of the
$\wedge$-elim rule.}
The system $\cal G$ does not allow to prove the decidability of any larger
fragment of Constructive Predicate Logic, but it is the basis of the
two systems presented in the Sections \ref{seckleene} and
\ref{secdychkoff}.

\subsection{Eliminating the contraction rule: Kleene style Sequent Calculus} 
\label{seckleene}

In the system $\cal G$, the proof 
$${\small 
  \irule{\irule{\irule{\rho}{\Gamma, \fa x ~A, (t/x)A \vdash B}{}}
               {\Gamma, \fa x ~ A, \fa x~ A \vdash B}
               {\mbox{$\fa$-left}}
        }
        {\Gamma, \fa x ~ A \vdash B}
        {\mbox{contraction}}}$$
is a general cut and we may replace it by the application of the
derivable rule
$${\small 
  \irule{\irule{\rho}{\Gamma, \fa x ~ A, (t/x)A \vdash B}{}}
        {\Gamma, \fa x ~ A \vdash B}
        {\mbox{contr-$\fa$-left}}}$$
which is a rule {\em \`a la} Kleene.
\VC{The other general cuts yields similar derivable rules.}\VL{
The other general cuts, where the last rule is a contraction and the
rule above is an introduction applied to the contracted proposition,
yield the derivable rules
$${\small 
  \irule{}
        {\Gamma, \bot \vdash G}
        {}}$$
$${\small 
  \irule{\Gamma, A \wedge B, A, B \vdash G}
        {\Gamma, A \wedge B \vdash  G}
        {}}$$
$${\small 
  \irule{\Gamma, A \vee B, A \vdash G~~~\Gamma, A \vee B, B \vdash G}
        {\Gamma, A \vee B \vdash G}
        {}}$$
$${\small 
  \irule{\Gamma, A \Rightarrow B \vdash A ~~~ \Gamma, A \Rightarrow B, B \vdash G}
        {\Gamma, A \Rightarrow B \vdash  G}
        {}}$$
$${\small 
  \irule{\Gamma, \ex x~A, A \vdash G}
        {\Gamma, \ex x~A \vdash G}
        {\mbox{if $x$ not free in $\Gamma, G$}}}$$
} 
But, as noticed by Kleene, the derivable rules for the contradiction, the conjunction, the
disjunction and the existential quantifier can be dropped, while that for
the implication can be simplified to
$${\small \irule{\Gamma, A \Rightarrow B \vdash A ~~~ \Gamma, B \vdash G}
        {\Gamma, A \Rightarrow B \vdash  G}
        {\mbox{contr-$\Rightarrow$-left}}}$$
The rules $\Rightarrow$-left and
$\fa$-left of the system $\cal G$, that are subsumed by the rules
contr-$\Rightarrow$-left and contr-$\fa$-left, can be also dropped.
There are also other general cuts, where the last rule is a
contraction and the rule above is an introduction applied to another
proposition, but these cuts can be eliminated without introducing any 
extra rule. In other words, after applying the contraction rule, 
we can focus on the contracted proposition \cite{DyckhoffLengrand}.

\begin{figure}[t]
{\scriptsize
\noindent\framebox{\parbox{\textwidth
}{
{
~~~~~~~~~$
\begin{array}{lr}
\irule{}
      {\Gamma, P \vdash P}
      {\mbox{axiom $P$ atomic}}
~~~~~~~~~~~~~~~~~~~~~~~~~~~~~~~~~~~~~~~~~~~~~~~~~\\
\irule{}
      {\Gamma \vdash \top}
      {\mbox{$\top$-right}}\\
\irule{}
      {\Gamma, \bot \vdash G}
      {\mbox{$\bot$-left}}\\
\irule{\Gamma, A, B \vdash G}
      {\Gamma, A \wedge B \vdash  G}
      {\mbox{$\wedge$-left}}\\
\irule{\Gamma \vdash A~~~\Gamma \vdash B}
      {\Gamma \vdash A \wedge B}
      {\mbox{$\wedge$-right}}\\
\irule{\Gamma, A \vdash G ~~~ \Gamma, B \vdash G}
      {\Gamma, A \vee B \vdash G}
      {\mbox{$\vee$-left}}\\
\irule{\Gamma \vdash A}
      {\Gamma \vdash A \vee B}
      {\mbox{$\vee$-right}}\\
\irule{\Gamma \vdash B}
      {\Gamma \vdash A \vee B}
      {\mbox{$\vee$-right}}\\
\irule{\Gamma, A \vdash B}
      {\Gamma \vdash  A \Rightarrow B}
      {\mbox{$\Rightarrow$-right}}
&
\irule{\Gamma, A \Rightarrow B \vdash A~~~\Gamma, B \vdash G}
      {\Gamma, A \Rightarrow B \vdash  G}
      {\mbox{contr-$\Rightarrow$-left}}\\
\irule{\Gamma \vdash A}
      {\Gamma \vdash \fa x ~ A}
      {\mbox{$\fa$-right if $x$ not free in $\Gamma$}}
&
\irule{\Gamma, \fa x ~ A, (t/x)A \vdash G}
      {\Gamma, \fa x ~ A \vdash G}
      {\mbox{contr-$\fa$-left}}\\
\irule{\Gamma, A \vdash G}
      {\Gamma, \ex x~A \vdash G}
      {\mbox{$\ex$-left if $x$ not free in $\Gamma, G$}}\\
\irule{\Gamma \vdash (t/x)A}
      {\Gamma \vdash \ex x~A}
      {\mbox{$\ex$-right}}
\end{array}$
\begin{center}
\caption{Kleene style Sequent Calculus: the system ${\cal K}$\label{k}}
\end{center}}}}
}
\end{figure}

We get this way the system $\cal K$ (Figure \ref{k}).  In this system,
all rules are introduction rules, except the rules
contr-$\Rightarrow$-left and contr-$\forall$-left. The system $\cal K$
plus the contraction rule is obviously sound and complete with respect
to the system $\cal G$.  To prove that the contraction rule can be
eliminated from it, and hence the system $\cal K$ also is sound and
complete with respect to the system $\cal G$, we prove the admissibility of the
contraction rule in the system $\cal K$\VC{---see the long 
version of the paper for the full proof}.
\VL{We start with the following lemmas.
\begin{lemma}[Kleene]
\label{kleene}
In the system $\cal K$,

\begin{itemize}
\item if a sequent $\Gamma, A \wedge B \vdash G$ has a proof $\pi$, 
then the sequent $\Gamma, A, B \vdash G$ has a proof 
$\pi'$ such that $|\pi'| \leq |\pi|$.

\item if a sequent $\Gamma, A \vee B \vdash G$ has a proof $\pi$, 
then the sequents 
$\Gamma, A \vdash G$ and $\Gamma, B \vdash G$ have proofs $\pi'_1$ and 
$\pi'_2$ such that $|\pi'_1| \leq |\pi|$ and $|\pi'_2| \leq |\pi|$.

\item if a sequent $\Gamma, \ex x~A \vdash G$ has a proof $\pi$, 
and $x$ is not free in $\Gamma, G$, then the sequent 
$\Gamma, A \vdash G$ has a proof $\pi'$ such that $|\pi'| \leq |\pi|$,
\end{itemize}

where $|\pi|$ is the height of the proof-tree $\pi$.
\end{lemma}

\begin{proof}
By induction on the height $|\pi|$ of $\pi$. 

\begin{itemize}
\item 
If the last rule of $\pi$ is a $\wedge$-left rule on $A \wedge B$
$${\small \irule{\irule{\rho}
               {\Gamma, A, B \vdash G}
               {}
        }
        {\Gamma, A \wedge B \vdash G}
        {\mbox{$\wedge$-left}}}$$
we take the proof $\rho$.
This last rule cannot be an axiom rule applied to $A \wedge B$, 
because $A \wedge B$ is not atomic.
If it is applied to an element of $\Gamma$ or to $G$
$${\small \irule{\irule{\rho_1}
               {\Gamma_1, A \wedge B \vdash G_1}
               {}
         ~~~...~~~
         \irule{\rho_n}
               {\Gamma_n, A \wedge B \vdash G_n}
               {}
        }
        {\Gamma, A \wedge B \vdash G}
        {R}}$$
then we apply the induction hypothesis to $\rho_1$, ..., $\rho_n$
yielding smaller proofs $\rho'_1$, ..., $\rho'_n$ of 
$\Gamma_1, A, B \vdash G_1$, ..., $\Gamma_n, A, B \vdash G_n$, 
we build the proof 
$${\small \irule{\irule{\rho'_1}
               {\Gamma_1, A, B \vdash G_1} 
               {}
          ~~~...~~~
         \irule{\rho'_n}
               {\Gamma_n, A, B \vdash G_n} 
               {}
        } 
        {\Gamma, A, B \vdash G}
        {R}}$$
that is smaller than $\pi$.

\item 
If the last rule is a $\vee$-left rule on $A \vee B$
$${\small \irule{\irule{\rho_1}
               {\Gamma, A \vdash G}
               {}
         ~~~~~~~
         \irule{\rho_2}
               {\Gamma, B \vdash G}
               {}
        }
        {\Gamma, A \vee B \vdash G}
        {\mbox{$\vee$-left}}}$$
we take $\rho_1$ and $\rho_2$.
This last rule cannot be an axiom rule applied to $A \vee B$, 
because $A \vee B$ is not atomic.
If it is applied to an element of $\Gamma$ or to $G$
$${\small \irule{\irule{\rho_1}
               {\Gamma_1, A \vee B \vdash G_1}
               {}
         ~~~...~~~
         \irule{\rho_n}
               {\Gamma_n, A \vee B \vdash G_n}
               {}
        }
        {\Gamma, A \vee B \vdash G}
        {R}}$$
then we apply the induction hypothesis to $\rho_1$, ..., $\rho_n$, 
yielding smaller proofs $\rho'_1$, ..., $\rho'_n$ of 
$\Gamma_1, A \vdash G_1$, ..., $\Gamma_n, A \vdash G_n$,
and $\rho''_1$, ..., $\rho''_n$ of 
$\Gamma_1, B \vdash G_1$, ..., $\Gamma_n, B \vdash G_n$. 
We build the proofs 
$${\small \irule{\irule{\rho'_1}
               {\Gamma_1, A \vdash G_1} 
               {} 
         ~~~...~~~
         \irule{\rho'_n}
               {\Gamma_n, A \vdash G_n} 
               {} 
        }
        {\Gamma, A \vdash G}
        {R}}$$
and 
$${\small \irule{\irule{\rho''_1}
               {\Gamma_1, B \vdash G_1} 
               {} 
         ~~~...~~~
         \irule{\rho''_n}
               {\Gamma_n, B \vdash G_n} 
               {} 
        }
        {\Gamma, B \vdash G}
        {R}}$$
that are smaller than $\pi$.

\item 
If the last rule is a $\ex$-left rule on $\ex x~A$
$${\small \irule{\irule{\rho}
               {\Gamma, A \vdash G}
               {}
        }
        {\Gamma, \ex x~A \vdash G}
        {\mbox{$\ex$-left}}}$$
we take $\rho$.
This last rule cannot be an axiom rule applied to $\ex x~A$, 
because $\ex x~A$ is not atomic.
If it is applied to an element of $\Gamma$ or to $G$
$${\small \irule{\irule{\rho_1}
               {\Gamma_1, \ex x~A \vdash G_1}
               {}
         ~~~...~~~
         \irule{\rho_n}
               {\Gamma_n, \ex x~A \vdash G_n}
               {}
        }
        {\Gamma, \ex x~A \vdash G}
        {R}}$$
then we apply the induction hypothesis to $\rho_1$, ..., $\rho_n$, 
yielding smaller proofs $\rho'_1$, ..., $\rho'_n$ of 
$\Gamma_1, A \vdash G_1$, ..., $\Gamma_n, A \vdash G_n$. 
We build the proof 
$${\small \irule{\irule{\rho'_1}
               {\Gamma_1, A \vdash G_1} 
               {} 
         ~~~...~~~
         \irule{\rho'_n}
               {\Gamma_n, A \vdash G_n} 
               {} 
        }
        {\Gamma, A \vdash G}
        {R}}$$
that is smaller than $\pi$.
\end{itemize}
\end{proof}

Note that there is no similar result for implication, because, in the 
system $\cal K$, the sequent 
$(P \vee (P \Rightarrow Q)) \Rightarrow Q \vdash Q$ has the proof
$${\small \irule{\irule{\irule{\irule{\irule{\irule{}
                                           {(P \vee (P \Rightarrow Q)) \Rightarrow Q, P \vdash P}
                                           {\mbox{axiom}}
                                    }
                                    {(P \vee (P \Rightarrow Q)) \Rightarrow Q, P \vdash P \vee (P \Rightarrow Q)}
                                    {\mbox{$\vee$-right}}
                              ~~~~~~~~~~~~~~~~~~
                              \irule{}
                                    {Q, P \vdash Q}
                                    {\mbox{axiom}}
                             }
                             {(P \vee (P \Rightarrow Q)) \Rightarrow Q, P \vdash Q}
                             {\mbox{contr-$\Rightarrow$-left}}
                      }
                      {(P \vee (P \Rightarrow Q)) \Rightarrow Q \vdash P \Rightarrow Q}
                      {\mbox{$\Rightarrow$-right}}
               }
               {(P \vee (P \Rightarrow Q)) \Rightarrow Q \vdash P \vee (P \Rightarrow Q)}
               {\mbox{$\vee$-right}}
         ~~~~~~~~~~~~~~~
         \irule{}
               {Q \vdash Q}
               {\mbox{axiom}}
        }
        {(P \vee (P \Rightarrow Q)) \Rightarrow Q \vdash Q}
        {\mbox{contr-$\Rightarrow$-left}}}$$
but the sequent $\vdash P \vee (P \Rightarrow Q)$ has no proof.

\begin{lemma}
\label{cdd}
In the system $\cal K$, 
if the sequent $\Gamma, C \Rightarrow D \vdash G$ has a proof $\pi$, 
then the sequent $\Gamma, D \vdash G$ has a 
proof $\pi'$ such that $|\pi'| \leq |\pi|$.
\end{lemma}

\begin{proof}
If the last rule is a contr-$\Rightarrow$-left rule then the proof has the form
$${\small \irule{\irule{\rho_1}
               {\Gamma, C \Rightarrow D \vdash C}
               {}
         ~~~
         \irule{\rho_2}
               {\Gamma, D \vdash G}
               {} 
        }
        {\Gamma, C \Rightarrow D \vdash G} 
        {\mbox{contr-$\Rightarrow$-left}}}$$
we take the proof $\rho_2$. 
This last rule cannot be an axiom rule on $C \Rightarrow D$ because 
$C \Rightarrow D$ is not atomic.
If it is applied to a proposition of $\Gamma$ or to $G$, 
we conclude by applying the induction hypothesis and the 
same rule.
\end{proof}

\begin{theorem}
In the system ${\cal K}$,
the contraction rule is admissible, that is
if the sequent $\Gamma, A, A \vdash G$ has a proof, then so
does the sequent $\Gamma, A \vdash G$.
\end{theorem}

\begin{proof}
We prove more generally, that 
if a sequent $\Gamma, A, A \vdash G$ has a proof $\pi$, 
then the sequent $\Gamma, A \vdash G$ has a proof 
$\pi'$ such that $|\pi'| \leq |\pi|$.

By induction on the height $|\pi|$ of $\pi$

\begin{itemize}
\item 
If the last rule of the proof of the sequent $\Gamma, A, A \vdash G$ is a
$\wedge$-left rule on the proposition $A$, then $A = (C \wedge D)$ and
the proof has the form
$${\small \irule{\irule{\rho}
               {\Gamma, C \wedge D, C, D \vdash G}
               {}
        }
        {\Gamma, C \wedge D, C \wedge D \vdash G}
        {\mbox{$\wedge$-left}}}$$
By Lemma \ref{kleene}, the sequent 
$\Gamma, C, D, C, D \vdash G$ has a proof smaller than $\rho$ and 
applying the induction hypothesis twice to this proof we get a proof 
$\rho'$ of the sequent $\Gamma, C, D \vdash G$.
We build the proof 
$${\small \irule{\irule{\rho'}
               {\Gamma, C, D \vdash G}
               {}
        }
        {\Gamma, C \wedge D \vdash G}
        {\mbox{$\wedge$-left}}}$$
that is smaller than $\pi$.

\item 
If the last rule of the proof is a $\vee$-left rule on the proposition $A$, 
then $A = (C \vee D)$ and the proof has the form
$${\small \irule{\irule{\rho_1}
               {\Gamma, C \vee D, C \vdash G}
               {}
          ~~~~~
         \irule{\rho_2}
               {\Gamma, C \vee D, D \vdash G}
               {}
        }
        {\Gamma, C \vee D, C \vee D \vdash G}
        {\mbox{$\vee$-left}}}$$
By Lemma \ref{kleene}, the sequents
$\Gamma, C, C \vdash G$ 
and 
$\Gamma, D, D \vdash G$ 
have proofs smaller than $\rho_1$ and $\rho_2$ respectively.
Applying the induction hypothesis to these proofs yield
proofs $\rho'_1$ and $\rho'_2$ 
of the sequents $\Gamma, C \vdash G$ and $\Gamma, D \vdash G$.
We build the proof 
$${\small \irule{\irule{\rho'_1}
               {\Gamma, C \vdash G}
               {}
         ~~~
         \irule{\rho'_2}
               {\Gamma, D \vdash G}
               {}
        }
        {\Gamma, C \vee D \vdash G}
        {\mbox{$\vee$-left}}}$$
that is smaller than $\pi$.

\item 
If the last rule of the proof is a $\ex$-left rule on the proposition $A$, 
then $A = (\ex x~C)$ and the proof has the form
$${\small \irule{\irule{\rho}
               {\Gamma, \ex x~C, C \vdash G}
               {}
        }
        {\Gamma, \ex x~C, \ex x~C \vdash G}
        {\mbox{$\ex$-left}}}$$
By Lemma \ref{kleene}, the sequent 
$\Gamma, C, (y/x)C \vdash G$ has a proof smaller than $\rho$, 
thus the sequent $\Gamma, C, C \vdash G$ has a proof smaller than $\rho$.
Applying the induction hypothesis to this proof we get a proof 
$\rho'$ of the sequent $\Gamma, C \vdash G$.
We build the proof 
$${\small \irule{\irule{\rho'}
               {\Gamma, C \vdash G}
               {}
        }
        {\Gamma, \ex x~C \vdash G}
        {\mbox{$\ex$-left}}}$$
that is smaller than $\pi$.

\item 
If the last rule of the proof is a $\bot$-left rule on the proposition $A$, 
then $A = \bot$ and the proof has the form
$${\small \irule{}
        {\Gamma, \bot, \bot \vdash G}
        {\mbox{$\bot$-left}}}$$
We build the proof 
$${\small \irule{}
        {\Gamma, \bot \vdash G}
        {\mbox{$\bot$-left}}}$$

\item 
If the last rule of the proof is a rule contr-$\Rightarrow$-left on the 
proposition $A$, then $A = (C \Rightarrow D)$ and the proof has the form
$${\small \irule{\irule{\rho_1} 
               {\Gamma, C \Rightarrow D, C \Rightarrow D \vdash C}
               {}
         ~~~~~~
         \irule{\rho_2} 
               {\Gamma, C \Rightarrow D, D \vdash G}
               {}
        }
        {\Gamma, C \Rightarrow D, C \Rightarrow D \vdash G}
        {\mbox{contr-$\Rightarrow$-left}}}$$

By induction hypothesis, the sequent $\Gamma, C \Rightarrow D \vdash
C$ has a proof $\rho'_1$ in the system ${\cal K}$.

By Lemma \ref{cdd}, the sequent $\Gamma, D, D \vdash C$ has a proof smaller than $\rho_2$
and applying the induction hypothesis to this
proof yields a proof $\rho'_2$ of the sequent $\Gamma, D \vdash C$.
We build the proof
$${\small \irule{\irule{\rho'_1} 
               {\Gamma, C \Rightarrow D \vdash C}
               {}
         ~~~~~~
         \irule{\rho'_2} 
               {\Gamma, D \vdash G}
               {}
        }
        {\Gamma, C \Rightarrow D \vdash G}
        {\mbox{contr-$\Rightarrow$-left}}}$$
that is smaller than $\pi$.

\item 
If the last rule of the proof is a rule contr-$\fa$-left on the proposition 
$A$, then $A = (\fa x ~ C)$ and the proof has the form
$${\small \irule{\irule{\rho} 
               {\Gamma, \fa x ~ C, \fa x ~ C, (t/x)C \vdash G}
               {}
        }
        {\Gamma, \fa x~ C, \fa x~ C \vdash G}
        {\mbox{contr-$\fa$-left}}}$$
By induction hypothesis, the sequent $\Gamma, \fa x~ C, (t/x)C \vdash G$,
has a proof $\rho'$ smaller than $\rho$. We build the proof 
$${\small \irule{\irule{\rho'} 
               {\Gamma, \fa x ~ C, (t/x)C \vdash G}
               {}
        }
        {\Gamma, \fa x ~ C \vdash G}
        {\mbox{contr-$\fa$-left}}}$$
that is smaller than $\pi$.

\item 
If the last rule of the proof is applied to a proposition of $\Gamma$ or 
to $G$, then the proof has the form
$${\small \irule{\irule{\rho_1} 
               {\Gamma_1, A, A \vdash G_1}
               {}
         ~~~...~~~
         \irule{\rho_n} 
               {\Gamma_n, A, A \vdash G_n}
               {}
        }
        {\Gamma, A, A \vdash G}
        {R}}$$
By induction hypothesis, the sequents $\Gamma_1, A \vdash G_1$,
..., $\Gamma_n, A \vdash G_n$ have proofs
$\rho'_1$, ..., $\rho'_n$ smaller respectively than $\rho_1$, ..., $\rho_n$. 
We build the proof 
$${\small \irule{\irule{\rho'_1} 
               {\Gamma_1, A \vdash G_1}
               {}
         ~~~...~~~
         \irule{\rho'_n} 
               {\Gamma_n, A \vdash G_n}
               {}
        }
        {\Gamma, A \vdash G}
        {R}}$$
that is smaller than $\pi$.
\end{itemize}
\end{proof}
}
The system $\cal K$ gives the decidability of a larger fragment of
Constructive Predicate Logic, where the implication and the universal
quantifier have no negative occurrences.

\subsection{Eliminating the contr-$\Rightarrow$-left rule: 
Vorob'ev-Hudelmaier-Dyckhoff-Negri style Sequent Calculus}
\label{secdychkoff}

In order to eliminate the contr-$\Rightarrow$-left rule, we consider
the general cuts where a sequent $\Gamma, A \Rightarrow B \vdash G$ is
proved with a contr-$\Rightarrow$-left rule whose major premise
$\Gamma, A \Rightarrow B \vdash A$ is proved with an introduction rule,
applied to the proposition $A$.  \VL{There are also other general
  cuts, where the last rule is a contr-$\Rightarrow$-left rule and the
  rule above is an introduction applied to a proposition different
  from $A$, but, as we shall see, we do not need to introduce any rule
  to eliminate these cuts. In other words, after applying the
  contr-$\Rightarrow$-left, we can focus on the proposition $A$
  \cite{DyckhoffLengrand}.}  This leads to consider the various cases
for $A$, that is hypotheses of the form $P \Rightarrow B$, $\top
\Rightarrow B$, $(C \wedge D) \Rightarrow B$, $(C \vee D) \Rightarrow
B$, $(C \Rightarrow D) \Rightarrow B$, $(\fa x~C) \Rightarrow B$, and
$(\ex x~C) \Rightarrow B$. The case $A = P$, atomic, needs to be
considered because the premise $\Gamma, A \Rightarrow B \vdash A$ may
be proved with the axiom rule, but the case $\bot \Rightarrow B$ does
not, because there is no right rule for the symbol $\bot$. This
enumeration of the various shapes of $A$ is the base of the sequent
calculi in the style of Vorob'ev, Hudelmaier, Dyckhoff, and Negri
\cite{Vorobev,Hudelmaier,Dyckhoff,DyckhoffNegri}.

We obtain this way several types of general cuts that can be eliminated 
\VC{by introducing derivable rules.}\VL{with the following derivable rules
$${\small \irule{\Gamma, P, B \vdash G}
        {\Gamma, P, P \Rightarrow B \vdash G}
        {\mbox{$\Rightarrow$-left$_{\mbox{\scriptsize \em axiom}}$}}}$$
$${\small \irule{\Gamma, B \vdash G}
        {\Gamma, \top \Rightarrow B \vdash G}
        {\mbox{$\Rightarrow$-left$_\top$}}}$$
$${\small \irule{\Gamma, (C \wedge D) \Rightarrow B \vdash C
         ~~~~~
         \Gamma, (C \wedge D) \Rightarrow B \vdash D
         ~~~~~
         \Gamma, B \vdash G
        }
        {\Gamma, (C \wedge D) \Rightarrow B \vdash G}
        {}}$$
$${\small \irule{\Gamma, (C \vee D) \Rightarrow B \vdash C~~~~~\Gamma, B \vdash G}
        {\Gamma, (C \vee D) \Rightarrow B \vdash G}
        {}}$$
$${\small \irule{\Gamma, (C \vee D) \Rightarrow B \vdash D~~~~~\Gamma, B \vdash G}
        {\Gamma, (C \vee D) \Rightarrow B \vdash G}
        {}}$$
$${\small \irule{\Gamma, (C \Rightarrow D) \Rightarrow B, C \vdash D~~~~~\Gamma, B \vdash G}
        {\Gamma, (C \Rightarrow D) \Rightarrow B \vdash G}
        {}}$$
$${\small \irule{\Gamma, (\fa x~ C) \Rightarrow B \vdash C~~~~~\Gamma, B \vdash G}
        {\Gamma, (\fa x~ C) \Rightarrow B \vdash G}
        {\mbox{$\Rightarrow$-left$_\fa$ ~ $x$ not free in $\Gamma, B$}}}$$
$${\small \irule{\Gamma, (\ex x~C) \Rightarrow B \vdash (t/x)C~~~~~\Gamma, B \vdash G}
        {\Gamma, (\ex x~C) \Rightarrow B \vdash G}
        {\mbox{$\Rightarrow$-left$_\ex$}}}$$
}
\begin{figure}[t]
{\scriptsize
\noindent\framebox{\parbox{\textwidth
}{
{
~~~~~~~~~~~~~~$
\begin{array}{ll}
\irule{}
      {\Gamma, P \vdash P}
      {\mbox{axiom $P$ atomic}}
~~~~~~~~~~~~~~~~~~~~~~~~~~~~~~~~~~~~~~~~~~\\
\irule{}
      {\Gamma \vdash \top}
      {\mbox{$\top$-right}}\\
\irule{}
      {\Gamma, \bot \vdash G}
      {\mbox{$\bot$-left}}\\
\irule{\Gamma, A, B \vdash G}
      {\Gamma, A \wedge B \vdash  G}
      {\mbox{$\wedge$-left}}\\
\irule{\Gamma \vdash A~~~\Gamma \vdash B}
      {\Gamma \vdash A \wedge B}
      {\mbox{$\wedge$-right}}\\
\irule{\Gamma, A \vdash G ~~~ \Gamma, B \vdash G}
      {\Gamma, A \vee B \vdash G}
      {\mbox{$\vee$-left}}\\
\irule{\Gamma \vdash A}
      {\Gamma \vdash A \vee B}
      {\mbox{$\vee$-right}}\\
\irule{\Gamma \vdash B}
      {\Gamma \vdash A \vee B}
      {\mbox{$\vee$-right}}\\
\irule{\Gamma, P, B \vdash G}
      {\Gamma, P, P \Rightarrow B \vdash G}
      {\mbox{$\Rightarrow$-left$_{\mbox{\scriptsize \em axiom}}$}}
\\
\irule{\Gamma, B \vdash G}
      {\Gamma, \top \Rightarrow B \vdash G}
      {\mbox{$\Rightarrow$-left$_\top$}}
\\
\irule{\Gamma, C \Rightarrow B \vdash C
       ~~~
       \Gamma, D \Rightarrow B \vdash D
       ~~~
       \Gamma, B \vdash G}
      {\Gamma, (C \wedge D) \Rightarrow B \vdash G}
      {\mbox{$\Rightarrow$-left$_\wedge$}}
\\
\irule{\Gamma, C \Rightarrow B, D \Rightarrow B \vdash C~~~\Gamma, B \vdash G}
      {\Gamma, (C \vee D) \Rightarrow B \vdash G}
      {\mbox{$\Rightarrow$-left$_\vee$}}
\\
\irule{\Gamma, C \Rightarrow B, D \Rightarrow B \vdash D~~~\Gamma, B \vdash G}
      {\Gamma, (C \vee D) \Rightarrow B \vdash G}
      {\mbox{$\Rightarrow$-left$_\vee$}}
\\
\irule{\Gamma, D \Rightarrow B, C \vdash D~~~\Gamma, B \vdash G}
      {\Gamma, (C \Rightarrow D) \Rightarrow B \vdash G}
      {\mbox{$\Rightarrow$-left$_\Rightarrow$}}
\\
&\irule{\Gamma, (\fa x~ C) \Rightarrow B \vdash C~~~\Gamma, B \vdash G}
      {\Gamma, (\fa x C) \Rightarrow B \vdash G}
      {\mbox{\parbox{3cm}{$\Rightarrow$-left$_\fa$\\$x$ not free in $\Gamma, B$}}}
\\
&\irule{\Gamma, (\ex x~C) \Rightarrow B \vdash (t/x)C~~~\Gamma, B \vdash G}
      {\Gamma, (\ex x~C) \Rightarrow B \vdash G}
      {\mbox{$\Rightarrow$-left$_\ex$}}
\\
\irule{\Gamma, A \vdash B}
      {\Gamma \vdash  A \Rightarrow B}
      {\mbox{$\Rightarrow$-right}}\\
\irule{\Gamma \vdash A}
      {\Gamma \vdash \fa x A}
      {\mbox{\parbox{3cm}{$\fa$-right\\$x$ not free in $\Gamma$}}}
&
\irule{\Gamma, \fa x A, (t/x)A \vdash G}
      {\Gamma, \fa x A \vdash G}
      {\mbox{contr-$\fa$-left}}\\
\irule{\Gamma, A \vdash G}
      {\Gamma, \ex x~A \vdash G}
      {\mbox{\parbox{3cm}{$\ex$-left\\$x$ not free in $\Gamma, G$}}}\\
\irule{\Gamma \vdash (t/x)A}
      {\Gamma \vdash \ex x~A}
      {\mbox{$\ex$-right}}
\end{array}$
\begin{center}
\caption{The system ${\cal D}$\label{d}}
\end{center}}}}
}
\end{figure}
\VL{Unfortunately, only the rules
$\Rightarrow$-left$_{\mbox{\scriptsize \em axiom}}$ and
$\Rightarrow$-left$_\top$ are introduction rules.  But the rules
corresponding to conjunction, disjunction, and implication can be
simplified in the following way
$${\small \irule{\Gamma, C \Rightarrow B \vdash C
         ~~~~~
         \Gamma, D \Rightarrow B \vdash D
         ~~~~~
         \Gamma, B \vdash G
        }
        {\Gamma, (C \wedge D) \Rightarrow B \vdash G}
        {\mbox{$\Rightarrow$-left$_\wedge$}}}$$
$${\small \irule{\Gamma, C \Rightarrow B, D \Rightarrow B \vdash C~~~~~\Gamma, B \vdash G}
        {\Gamma, (C \vee D) \Rightarrow B \vdash G}
        {\mbox{$\Rightarrow$-left$_\vee$}}}$$
$${\small \irule{\Gamma, C \Rightarrow B, D \Rightarrow B \vdash D~~~~~\Gamma, B \vdash G}
        {\Gamma, (C \vee D) \Rightarrow B \vdash G}
        {\mbox{$\Rightarrow$-left$_\vee$}}}$$
$${\small \irule{\Gamma, D \Rightarrow B, C \vdash D~~~~~\Gamma, B \vdash G}
        {\Gamma, (C \Rightarrow D) \Rightarrow B \vdash G}
        {\mbox{$\Rightarrow$-left$_\Rightarrow$}}}$$
}
\VC{These rules can be simplified leading to the system 
${\cal D}$ (Figure \ref{d}).}\VL{These rules are introduction rules for 
the multiset order on sequents
\cite{DershowitzManna}. To define this order, we first define an
order on propositions $A \prec B$: if $A$ contains strictly less connectors and
quantifiers than $B$.  Then, this order is extended to sequents: $(A_1,
..., A_n \vdash A_{n+1}) \prec (B_1, ..., B_m \vdash B_{m+1})$ if
there exists two multisets $X$ and $Y$ such that $X \neq \varnothing$,
$X \subseteq \{B_1, ..., B_m, B_{m+1}\}$, $\{A_1, ..., A_n, A_{n+1}\}
= (\{B_1, ..., B_m, B_{m+1}\} \setminus X) \cup Y$ and for all
propositions $A$ in $Y$, there exists a proposition $B$ in $X$ such
that $A \prec B$.

The rules $\Rightarrow$-left$_\wedge$, $\Rightarrow$-left$_\vee$, and
$\Rightarrow$-left$_\Rightarrow$ are admissible in the system $\cal K$,
because the weakening rule, the {\em modus ponens} rule, and the general
axiom rule are admissible in the system $\cal K$ and the propositions
$${\small ((C \Rightarrow B) \Rightarrow C) \Rightarrow
((D \Rightarrow B) \Rightarrow D) \Rightarrow (B \Rightarrow G)
\Rightarrow 
((C \wedge D) \Rightarrow B) \Rightarrow G}$$
$${\small ((C \Rightarrow B) \Rightarrow (D \Rightarrow B) \Rightarrow C)
\Rightarrow (B \Rightarrow G) \Rightarrow 
((C \vee D) \Rightarrow B) \Rightarrow G}$$
$${\small ((C \Rightarrow B) \Rightarrow (D \Rightarrow B) \Rightarrow D)
\Rightarrow (B \Rightarrow G) \Rightarrow 
((C \vee D) \Rightarrow B) \Rightarrow G}$$
$${\small ((D \Rightarrow B) \Rightarrow C \Rightarrow D) \Rightarrow (B \Rightarrow G)
\Rightarrow ((C \Rightarrow D) \Rightarrow B) \Rightarrow G}$$
are provable in this system.

We get this way the system ${\cal D}$ (Figure \ref{d}).
In fact, many variants of the system $\cal D$ are possible as the proposition $(C \wedge D)
\Rightarrow B$ can be transformed into the proposition $C \Rightarrow
D \Rightarrow B$, either directly in the sequent $\Gamma, (C \wedge D)
\Rightarrow B \vdash G$ or after applying a contr-$\Rightarrow$-left
rule, or after applying both a contr-$\Rightarrow$-left rule and a
$\wedge$-right rule and, after applying these two rules, it is also
possible to use another simplification scheme simplifying the premise
$\Gamma, (C \wedge D) \Rightarrow B \vdash C$ and $\Gamma, (C \wedge
D) \Rightarrow B \vdash D$ to $\Gamma, C \Rightarrow B \vdash C$ and
$\Gamma, D \Rightarrow B \vdash D$, in the
$\Rightarrow$-left$_\wedge$ rule.  In the first case, we obtain the
rule \cite{Dyckhoff,DyckhoffNegri}
$${\small \irule {\Gamma, C \Rightarrow D \Rightarrow B \vdash G}
              {\Gamma, (C \wedge D) \Rightarrow B \vdash G}
              {}}$$
that is still an introduction rules, provided we give a higher weight to
conjunction than to implication.  This rule is not directly related to the
contr-$\Rightarrow$-left rule.  In the same way, the sequent $\Gamma,
(C \vee D) \Rightarrow B \vdash G$ can be transformed into the
equivalent one $\Gamma, C \Rightarrow B, D \Rightarrow B \vdash G$,
leading to the rule that is not directly related to the
contr-$\Rightarrow$-left rule \cite{Dyckhoff,DyckhoffNegri}
$${\small \irule {\Gamma, C \Rightarrow B, D \Rightarrow B \vdash G}
              {\Gamma, (C \vee D) \Rightarrow B \vdash G}       
            {}}$$
For implication however, the only choice seems to be to
take the rule $\Rightarrow$-left$_\Rightarrow$ \cite{DyckhoffNegri} or
its variant \cite{Dyckhoff}
 $${\small \irule{\Gamma, D \Rightarrow B \vdash C \Rightarrow D~~~~~\Gamma, B \vdash G}
        {\Gamma, (C \Rightarrow D) \Rightarrow B \vdash G}
        {}}$$
that simplifies the premise after a contr-$\Rightarrow$-left rule, but
not after the $\Rightarrow$-right rule.}
The system ${\cal D}$ plus the contr-$\Rightarrow$-left rule is
obviously sound and complete with respect to the system $\cal K$.  
To prove that the
contr-$\Rightarrow$-left rule can be eliminated, and hence the system
${\cal D}$ also is sound and complete with respect to the system $\cal
K$, we use a method similar to that of \cite{DyckhoffNegri}, and prove
the admissibility of the contr-$\Rightarrow$-left rule\VC{---see the long version of the paper for the full proof}.
\VL{We start with the following lemmas.
\begin{lemma}[Weakenning]
\label{weakenningD}
If the sequent $\Gamma \vdash B$ has a proof $\pi$ in the system $\cal D$,
then the sequent $\Gamma, A \vdash B$ also has a proof.
\end{lemma}

\begin{lemma}[Kleene]
\label{kleeneD}
In the system $\cal D$,
\begin{itemize}
\item if a sequent $\Gamma, A \wedge B \vdash G$ has a proof $\pi$, 
then the sequent $\Gamma, A, B \vdash G$ has a proof,

\item if a sequent $\Gamma, A \vee B \vdash G$ has a proof $\pi$, 
then the sequents $\Gamma, A \vdash G$ and $\Gamma, B \vdash G$ have proofs,

\item if a sequent $\Gamma, \ex x~A \vdash G$ has a proof $\pi$ and $x$ is not free in $\Gamma, G$, 
then the sequent 
$\Gamma, A \vdash G$ has a proof.
\end{itemize}
\end{lemma}

\begin{proof}
By induction over the structure of $\pi$.
\begin{itemize}
\item 
If the last rule of $\pi$ is a $\wedge$-left rule on $A \wedge B$, 
then $\pi$ has the form 
$${\small \irule{\irule{\rho}
               {\Gamma, A, B \vdash G}
               {}
        }
        {\Gamma, A \wedge B \vdash G}
        {\mbox{$\wedge$-left}}}$$
and we take $\rho$.
This last rule
cannot be an axiom rule on $A \wedge B$ because $A \wedge B$ is not atomic.
If it is applied to a proposition of $\Gamma$ or to $G$, 
then $\pi$ has the form
$${\small \irule{\irule{\rho_1}
               {\Gamma_1, A \wedge B \vdash G_1} 
               {}
         ~~~~...~~~~
         \irule{\rho_n}
               {\Gamma_n, A \wedge B \vdash G_n} 
               {}
        }
        {\Gamma, A \wedge B \vdash G} 
        {R}}$$
By induction hypothesis the sequents 
$\Gamma_1, A, B \vdash G_1$, ..., $\Gamma_n, A, B \vdash G_n$
have proofs $\rho_1'$, ..., $\rho_n'$. We build the proof 
$${\small \irule{\irule{\rho'_1}
               {\Gamma_1, A, B \vdash G_1} 
               {}
         ~~~~...~~~~
         \irule{\rho'_n}
               {\Gamma_n, A, B \vdash G_n} 
               {}
        }
        {\Gamma, A, B \vdash G} 
        {R}}$$

\item 
If the last rule of $\pi$ is a $\vee$-left rule on $A \vee B$, 
then $\pi$ has the form 
$${\small \irule{\irule{\rho_1}
               {\Gamma, A \vdash G}
               {}
          ~~~~~
         \irule{\rho_2}
               {\Gamma, B \vdash G}
               {}
        }
        {\Gamma, A \vee B \vdash G}
        {\mbox{$\vee$-left}}}$$
and we take $\rho_1$ and $\rho_2$.
This last rule 
cannot be an axiom rule on $A \vee B$ because $A \vee B$ is not atomic.
If it applies to a proposition of $\Gamma$ or to $G$, then $\pi$ has the form
$${\small \irule{\irule{\rho_1}
               {\Gamma_1, A \vee B \vdash G_1} 
               {}
         ~~~~...~~~~
         \irule{\rho_n}
               {\Gamma_n, A \vee B \vdash G_n} 
               {}
        }
        {\Gamma, A \vee B \vdash G} 
        {R}}$$
By induction hypothesis the sequents 
$\Gamma_1, A \vdash G_1$, ..., $\Gamma_n, A \vdash G_n$, 
$\Gamma_1, B \vdash G_1$, ..., $\Gamma_n, B \vdash G_n$ 
have proofs $\rho'_1$, ..., $\rho'_n$, $\rho''_1$, ..., $\rho''_n$.
We build the proofs 
$${\small \irule{\irule{\rho'_1}
               {\Gamma_1, A \vdash G_1} 
               {}
         ~~~~...~~~~
         \irule{\rho'_n}
               {\Gamma_n, A \vdash G_n} 
               {}
        }
        {\Gamma, A \vdash G} 
        {R}}$$
and 
$${\small \irule{\irule{\rho''_1}
               {\Gamma_1, B \vdash G_1} 
               {}
         ~~~~...~~~~
         \irule{\rho''_n}
               {\Gamma_n, B \vdash G_n} 
               {}
        }
        {\Gamma, B \vdash G} 
        {R}}$$

\item 
If the last rule of $\pi$ is a $\ex$-left rule on $\ex x~A$, then 
$\pi$ has the form 
$${\small \irule{\irule{\rho}
               {\Gamma, A \vdash G}
               {}
        }
        {\Gamma, \ex x~A \vdash G}
        {}}$$
and we take $\rho$.
This last rule 
cannot be an axiom rule on $\ex x~A$ because $\ex x~A$ is not atomic.
If it is applied to 
a proposition of $\Gamma$ or to $G$, then $\pi$ has the form
$${\small \irule{\irule{\rho_1}
               {\Gamma_1, \ex x~A \vdash G_1} 
               {}
         ~~~~...~~~~
         \irule{\rho_n}
               {\Gamma_n, \ex x~A \vdash G_n} 
               {}
        }
        {\Gamma, \ex x~A \vdash G} 
        {R}}$$
By induction hypothesis, the sequents 
$\Gamma_1, A \vdash G_1$, ..., $\Gamma_n, A \vdash G_n$ 
have proofs $\rho_1'$, ..., $\rho_n'$. We build the proof 
$${\small \irule{\irule{\rho'_1}
               {\Gamma_1, A \vdash G_1} 
               {}
         ~~~~...~~~~
         \irule{\rho'_n}
               {\Gamma_n, A \vdash G_n} 
               {}
        }
        {\Gamma, A \vdash G} 
        {R}}$$
\end{itemize}
\end{proof}

\begin{lemma}
\label{cddD}
In the system $\cal D$, if the sequent $\Gamma, C \Rightarrow D \vdash
G$ has a proof $\pi$, then the sequent $\Gamma, D \vdash G$ has a
proof.
\end{lemma}

\begin{proof}
By induction over the structure of $\pi$.  If the last rule of $\pi$
is a rule of $\Rightarrow$-left$_{\mbox{\scriptsize \em axiom}}$,
$\Rightarrow$-left$_\top$, $\Rightarrow$-left$_\wedge$,
$\Rightarrow$-left$_\vee$, $\Rightarrow$-left$_\Rightarrow$,
$\Rightarrow$-left$_\fa$, $\Rightarrow$-left$_\ex$ on $C \Rightarrow
D$, then the rightmost premise of this rule is $\Gamma, D \vdash G$,
thus this sequent has a proof.
This last rule cannot be an axiom rule on $C \Rightarrow D$ because 
$C \Rightarrow D$ is not atomic.
If it is applied to a proposition of $\Gamma$ or to $G$, then $\pi$ has 
the form 
$${\small 
  \irule{\irule{\rho_1}
               {\Gamma_1, C \Rightarrow D \vdash G_1} 
               {}
         ~~~~...~~~~
         \irule{\rho_n}
               {\Gamma_n, C \Rightarrow D \vdash G_n} 
               {}
        }
        {\Gamma, C \Rightarrow D \vdash G} 
        {R}}$$
By induction hypothesis, the sequents 
$\Gamma_1, D \vdash G_1$, ..., $\Gamma_n, D \vdash G_n$
have proofs $\rho'_1$, ..., $\rho'_n$. We build the proof 
$${\small \irule{\irule{\rho'_1}
               {\Gamma_1, D \vdash G_1} 
               {}
         ~~~~...~~~~
         \irule{\rho'_n}
               {\Gamma_n, D \vdash G_n} 
               {}
        }
        {\Gamma, D \vdash G} 
        {R}}$$
\end{proof}

Now, we would like to prove three simplification lemmas stating that
if $\Gamma, (C \wedge D) \Rightarrow B \vdash G$ has a proof, then
$\Gamma, C \Rightarrow B \vdash G$ and $\Gamma, D \Rightarrow B \vdash
G$ have proofs, that if $\Gamma, (C \vee D) \Rightarrow B \vdash G$
has a proof, then $\Gamma, C \Rightarrow B, D \Rightarrow B \vdash G$
has a proof, and that if $\Gamma, (C \Rightarrow D) \Rightarrow B
\vdash G$ has a proof, then $\Gamma, D \Rightarrow B, C \vdash G$ has
a proof, then to prove the admissibility of the contraction rule, and
finally that of the contr-$\Rightarrow$-left rule.  Unfortunately, the
proofs of the simplification lemmas use the admissibility of the
contr-$\Rightarrow$-left rule.  We thus need to prove these five
properties by a simultaneous induction.

\begin{lemma}
\label{admissibility}
In the system ${\cal D}$, 
\begin{enumerate}
\item \label{simpand}
if $A = (C \wedge D)$ and the sequent $\Gamma, A \Rightarrow B \vdash
G$ has a proof $\pi$, then the sequents $\Gamma, C \Rightarrow
B \vdash G$ and $\Gamma, D \Rightarrow B \vdash G$ have proofs,

\item \label{simpor}
if $A = (C \vee D)$ and 
the sequent $\Gamma, A \Rightarrow B \vdash G$ has a
proof $\pi$, then the sequent 
$\Gamma, C \Rightarrow B, D \Rightarrow B \vdash G$ has a proof,

\item \label{simpimp}
if $A = (C \Rightarrow D)$
and the sequent $\Gamma, A \Rightarrow B \vdash G$ has a proof $\pi$, then the sequent
$\Gamma, D \Rightarrow B, C \vdash G$ has a proof,

\item \label{admcontr}
if the sequent $\Gamma, A, A \vdash G$ has a proof $\pi$, then the sequent $\Gamma, A \vdash G$ has a proof, 

\item \label{admimp}
if the sequent $\Gamma, A \Rightarrow B \vdash A$ has a proof
$\pi$
and the sequent $\Gamma, B \vdash G$ has a proof $\pi'$,
then the sequent $\Gamma, A \Rightarrow B \vdash G$ has a proof.
\end{enumerate}
\end{lemma}

\begin{proof}
By induction on the size of $A$, that is the number of 
connectors and quantifiers in $A$, and then on the structure of $\pi$.

\begin{enumerate}
\item
Assume $A = (C \wedge D)$ and the sequent $\Gamma, A \Rightarrow B \vdash G$ has
a proof $\pi$, we want to prove that the sequents 
$\Gamma, C \Rightarrow B \vdash G$ and
$\Gamma, D \Rightarrow B \vdash G$ have proofs. 

If the last rule of $\pi$ is a $\Rightarrow$-left$_\wedge$ rule on 
$(C \wedge D) \Rightarrow B$, then $\pi$ has the form
$${\small \irule{\irule{\rho_1}
               {\Gamma, C \Rightarrow B \vdash C}
               {}
          ~~~~~
         \irule{\rho_2}
               {\Gamma, D \Rightarrow B \vdash D}
               {}
         ~~~~~
         \irule{\rho_3}
               {\Gamma, B \vdash G}
               {}
        }
        {\Gamma, (C \wedge D) \Rightarrow B \vdash G}
        {\mbox{$\Rightarrow$-left$_\wedge$}}}$$
The sequents $\Gamma, C \Rightarrow B \vdash C$ and $\Gamma, B \vdash G$
have proofs. By induction hypothesis (item {\em \ref{admimp}.}, smaller 
proposition), the sequent 
$\Gamma, C \Rightarrow B \vdash G$ has a proof.

If this last rule is applied to a proposition of $\Gamma$ or to $G$, 
then $\pi$ has the form 
$${\small \irule{\irule{\rho_1}
               {\Gamma_1, (C \wedge D) \Rightarrow B \vdash G_1}
               {}
         ~~~...~~~
         \irule{\rho_n}
               {\Gamma_n, (C \wedge D) \Rightarrow B \vdash G_n}
               {}
        }
        {\Gamma, (C \wedge D) \Rightarrow B \vdash G}
        {R}}$$
By induction hypothesis (item {\em \ref{simpand}.}, same proposition, smaller proof)
the sequents 
$\Gamma_1, C \Rightarrow B \vdash G_1$, ..., 
$\Gamma_n, C \Rightarrow B \vdash G_n$
have proofs $\rho'_1$, ..., $\rho'_n$.
We build the proof 
$${\small \irule{\irule{\rho'_1}
               {\Gamma_1, C \Rightarrow B \vdash G_1}
               {}
         ~~~...~~~
         \irule{\rho'_n}
               {\Gamma_n, C \Rightarrow B \vdash G_n}
               {}
        }
        {\Gamma, C \Rightarrow B \vdash G}
        {R}}$$

We prove in the same way that the sequent $\Gamma, D \Rightarrow B \vdash G$ 
has a proof.

\item 
Assume $A = (C \vee D)$ and the sequent $\Gamma, A \Rightarrow B
\vdash G$ has a proof $\pi$, we want to prove that the sequent
$\Gamma, C \Rightarrow B, D \Rightarrow B \vdash G$ has a proof.

If the last rule of $\pi$ is a first $\Rightarrow$-left$_\vee$ rule on
$(C \vee D) \Rightarrow B$, then $\pi$ has the form
$${\small 
  \irule{\irule{\rho_1}
               {\Gamma, C \Rightarrow B, D \Rightarrow B \vdash C}
               {}
         ~~~~~
         \irule{\rho_2}
               {\Gamma, B \vdash G}
               {}
        }
        {\Gamma, (C \vee D) \Rightarrow B \vdash G}
        {\mbox{$\Rightarrow$-left$_\vee$}}}$$
The sequents $\Gamma, C \Rightarrow B, D \Rightarrow B \vdash C$ and
$\Gamma, B \vdash G$ have proofs.  By Lemma \ref{weakenningD}, the
sequent $\Gamma, D \Rightarrow B, B \vdash G$ has a proof $\rho'_2$.
By induction hypothesis (item {\em \ref{admimp}.}, smaller
proposition), the sequent $\Gamma, C \Rightarrow B, D \Rightarrow B
\vdash G$ has a proof.

If this last rule is a second $\Rightarrow$-left$_\vee$ rule on 
$(C \vee D) \Rightarrow B$, we proceed in the same way.

If it is applied to a proposition of $\Gamma$ or to $G$,
then $\pi$ has the form 
$${\small \irule{\irule{\rho_1}
               {\Gamma_1, (C \vee D) \Rightarrow B \vdash G_1}
               {}
         ~~~...~~~
         \irule{\rho_n}
               {\Gamma_n, (C \vee D) \Rightarrow B \vdash G_n}
               {}
        }
        {\Gamma, (C \vee D) \Rightarrow B \vdash G}
        {R}}$$
By induction hypothesis (item {\em \ref{simpor}.}, same proposition, smaller proof), 
the sequents 
$\Gamma_1, C \Rightarrow B, D \Rightarrow B \vdash G_1$, ..., 
$\Gamma_n, C \Rightarrow B, D \Rightarrow B \vdash G_n$
have proofs $\rho'_1$, ..., $\rho'_n$.
We build the proof 
$${\small \irule{\irule{\rho'_1}
               {\Gamma_1, C \Rightarrow B, D \Rightarrow B \vdash G_1}
               {}
         ~~~...~~~
         \irule{\rho'_n}
               {\Gamma_n, C \Rightarrow B, D \Rightarrow B \vdash G_n}
               {}
        }
        {\Gamma, C \Rightarrow B, D \Rightarrow B \vdash G}
        {R}}$$

\item 
Assume $A = (C \Rightarrow D)$ and 
the sequent $\Gamma, A \Rightarrow B \vdash G$ has a proof $\pi$, 
we want to prove that the sequent
$\Gamma, D \Rightarrow B, C \vdash G$ has a proof.

If the last rule of $\pi$ is a $\Rightarrow$-left$_\Rightarrow$ rule
on $(C \Rightarrow D) \Rightarrow B$, then $\pi$ has the form
$${\small \irule{\irule{\rho_1}
               {\Gamma, D \Rightarrow B, C \vdash D}
               {}
         ~~~~~
         \irule{\rho_2}
               {\Gamma, B \vdash G}
               {}
        }
        {\Gamma, (C \Rightarrow D) \Rightarrow B \vdash G}
        {\mbox{$\Rightarrow$-left$_\Rightarrow$}}}$$
The sequents $\Gamma, D \Rightarrow B, C \vdash D$ and
$\Gamma, B \vdash G$ have proofs.
By Lemma \ref{weakenningD}, 
the sequent $\Gamma, C, B \vdash G$
has a proof $\rho'_2$.  By induction hypothesis
(item {\em \ref{admimp}.}, smaller proposition), 
the sequent $\Gamma, D \Rightarrow B, C \vdash G$ has a proof.

If this last rule is applied to a proposition of $\Gamma$ or to $G$, 
then $\pi$ has the form 
$${\small \irule{\irule{\rho_1}
               {\Gamma_1, (C \Rightarrow D) \Rightarrow B \vdash G_1}
               {}
         ~~~...~~~
         \irule{\rho_n}
               {\Gamma_n, (C \Rightarrow D) \Rightarrow B \vdash G_n}
               {}
        }
        {\Gamma, (C \Rightarrow D) \Rightarrow B \vdash G}
        {R}}$$
By induction hypothesis (item {\em \ref{simpimp}.}, 
same proposition, smaller proof), 
the sequents 
$\Gamma_1, D \Rightarrow B, C \vdash G_1$, ..., 
$\Gamma_n, D \Rightarrow B, C \vdash G_n$ have proofs
$\rho'_1$, ..., $\rho'_n$.
We build the proof 
$${\small \irule{\irule{\rho_1}
               {\Gamma_1, D \Rightarrow B, C \vdash G_1}
               {}
         ~~~...~~~
         \irule{\rho_n}
               {\Gamma_n, D \Rightarrow B, C \vdash G_n}
               {}
        }
        {\Gamma, D \Rightarrow B, C \vdash G}
        {R}}$$

\item Assume that the sequent $\Gamma, A, A \vdash G$ has a
proof $\pi$, we want to prove that the sequent $\Gamma, A
\vdash G$ has a proof.

\begin{itemize}
\item If the last rule of $\pi$ is an axiom rule on $A$, then $A = G$
and $\pi$ has the form 
$${\small \irule{}
        {\Gamma, G, G \vdash G}
        {\mbox{axiom}}}$$
We build the proof
$${\small \irule{}
        {\Gamma, G \vdash G}
        {\mbox{axiom}}}$$

\item If the last rule of $\pi$ is an $\bot$-left rule on $A$, then 
$A = \bot$ and $\pi$ has the form
$${\small \irule{}
        {\Gamma, \bot, \bot \vdash G}
        {\mbox{$\bot$-left}}}$$ 
We build the proof
$${\small \irule{}
        {\Gamma, \bot \vdash G}
        {\mbox{$\bot$-left}}}$$ 

\item If the last rule of $\pi$ is an $\wedge$-left rule on $A$, then 
$A = (C \wedge D)$ and $\pi$ has the form 
$${\small \irule{\irule{\rho}
               {\Gamma, C \wedge D, C, D \vdash G}
               {}
        }
        {\Gamma, C \wedge D, C \wedge D \vdash G}
        {\mbox{$\wedge$-left}}}$$
The sequent $\Gamma, C \wedge D, C, D \vdash G$ has a proof.
By Lemma \ref{kleeneD}, the sequent $\Gamma, C, D, C, D \vdash G$ 
has a proof. By induction hypothesis (item {\em \ref{admcontr}.}, 
smaller proposition), 
the sequent $\Gamma, C, C, D \vdash G$ has a proof.
By induction hypothesis (item {\em \ref{admcontr}.}, smaller proposition) again,
the sequent $\Gamma, C, D \vdash G$ has a proof $\rho'$.
We build the proof
$${\small \irule{\irule{\rho'}
               {\Gamma, C, D \vdash G}
               {}
        }
        {\Gamma, C \wedge D \vdash G}
        {\mbox{$\wedge$-left}}}$$

\item If the last rule of $\pi$ is a $\vee$-left rule on $A$, 
then $A = (C \vee D)$ and $\pi$ has the form 
$${\small \irule{\irule{\rho_1}
               {\Gamma, C \vee D, C \vdash G}
               {}
         ~~~~~
         \irule{\rho_2}
               {\Gamma, C \vee D, D \vdash G}
               {}
        }
        {\Gamma, C \vee D, C \vee D \vdash G}
        {\mbox{$\vee$-left}}}$$
The sequents $\Gamma, C \vee D, C \vdash G$ and 
$\Gamma, C \vee D, D \vdash G$ have proofs.
By Lemma \ref{kleeneD}, the sequents $\Gamma, C, C \vdash G$ 
and $\Gamma, D, D \vdash G$ 
have proofs. By induction hypothesis (item {\em \ref{admcontr}.}, smaller proposition), 
the sequent $\Gamma, C \vdash G$ and $\Gamma, D \vdash G$ 
have proofs $\rho'_1$ and $\rho'_2$. 
We build the proof
$${\small \irule{\irule{\rho'_1}
               {\Gamma, C \vdash G}
               {}
         ~~~~~
         \irule{\rho'_2}
               {\Gamma, D \vdash G}
               {}
        }
        {\Gamma, C \vee D \vdash G}
        {\mbox{$\vee$-left}}}$$

\item If the last rule of $\pi$ is a $\ex$-left rule on $A$, 
then $A = (\ex x~C)$ and $\pi$ has the form 
$${\small \irule{\irule{\rho}
               {\Gamma, \ex x~C, C \vdash G}
               {}
        }
        {\Gamma, \ex x~C, \ex x~C \vdash G}
        {\mbox{$\ex$-left}}}$$
The sequent $\Gamma, \ex x~C, C \vdash G$ has a proof, 
where $x$ is not free in $\Gamma$ and $G$. 
By Lemma \ref{kleeneD}, the sequent $\Gamma, (x'/x)C, C \vdash G$ 
has a proof, where $x'$ is a variable not free in $\Gamma$, $C$, and 
$G$.
Therefore the sequent $\Gamma, C, C \vdash G$ 
has a proof. By induction hypothesis (item {\em \ref{admcontr}.}, smaller proposition), 
the sequent $\Gamma, C \vdash G$ has a proof. 
We build the proof 
$${\small \irule{\irule{\rho'}
               {\Gamma, C \vdash G}
               {}
        }
        {\Gamma, \ex x~C \vdash G}
        {\mbox{$\ex$-left}}}$$

\item If the last rule of $\pi$ is a contr-$\fa$-left rule on $A$, 
then $A = (\fa x~C)$ and $\pi$ has the form 
$${\small \irule{\irule{\rho}
               {\Gamma, \fa x~C, \fa x~C, (t/x)C \vdash G}
               {}
        }
        {\Gamma, \fa x~C, \fa x~C \vdash G}
        {\mbox{contr-$\fa$-left}}}$$
The sequent $\Gamma, \fa x~C, \fa x~C, (t/x)C \vdash G$ has a proof.
By induction hypothesis (item {\em \ref{admcontr}.}, same proposition, smaller proof)
the sequent $\Gamma, \fa x~C, (t/x)C \vdash G$ has a proof $\rho'$.
We build the proof 
$${\small \irule{\irule{\rho'}
               {\Gamma, \fa x~C, (t/x)C \vdash G}
               {}
        }
        {\Gamma, \fa x~C \vdash G}
        {\mbox{contr-$\fa$-left}}}$$

\item If the last rule of $\pi$ is a 
$\Rightarrow$-left$_{\mbox{\scriptsize \em axiom}}$ rule
on $A$, then $A = (P \Rightarrow B)$, for $P$ atomic, $\Gamma = \Gamma', P$,
and $\pi$ has the form
$${\small \irule{\irule{\rho}
               {\Gamma', P, P \Rightarrow B, B \vdash G}
               {}
        }
        {\Gamma', P, P \Rightarrow B, P \Rightarrow B \vdash G}
        {\mbox{$\Rightarrow$-left$_{\mbox{\scriptsize \em axiom}}$}}}$$
The sequent $\Gamma', P, P \Rightarrow B, B \vdash G$ has a proof.
By Lemma \ref{cddD}, the sequent $\Gamma', P, B, B \vdash G$ 
has a proof. By induction hypothesis (item {\em \ref{admcontr}.}, smaller proposition), 
the sequent $\Gamma', P, B \vdash G$ has a proof $\rho'$.
We build the proof 
$${\small \irule{\irule{\rho'}
               {\Gamma', P, B \vdash G}
               {}
        }
        {\Gamma', P, P \Rightarrow B \vdash G}
        {\mbox{$\Rightarrow$-left$_{\mbox{\scriptsize \em axiom}}$}}}$$

\item If the last rule of $\pi$ is a $\Rightarrow$-left$_\top$ rule on $A$, 
then $A = (\top \Rightarrow B)$ and $\pi$ has the form 
$${\small \irule{\irule{\rho}
               {\Gamma, \top \Rightarrow B, B \vdash G}
               {}
        }
        {\Gamma, \top \Rightarrow B, \top \Rightarrow B \vdash G}
        {\mbox{$\Rightarrow$-left$_\top$}}}$$
The sequent $\Gamma, \top \Rightarrow B, B \vdash G$ has a proof.
By Lemma \ref{cddD}, the sequent $\Gamma, B, B \vdash G$ 
has a proof. By induction hypothesis (item {\em \ref{admcontr}.}, smaller proposition), 
the sequent $\Gamma, B \vdash G$ has a proof $\rho'$.
We build the proof 
$${\small \irule{\irule{\rho'}
               {\Gamma, B \vdash G}
               {}
        }
        {\Gamma, \top \Rightarrow B \vdash G}
        {\mbox{$\Rightarrow$-left$_\top$}}}$$

\item If the last rule of $\pi$ is a $\Rightarrow$-left$_\wedge$ rule on $A$, 
then $A = ((C \wedge D) \Rightarrow B)$ and $\pi$ has the form 
$$\hspace*{-0.9cm}
  {\scriptsize 
  \irule{\irule{\rho_1}
               {\Gamma, (C \wedge D) \Rightarrow B, C \Rightarrow B \vdash C}
               {}
          ~
         \irule{\rho_2}
               {\Gamma, (C \wedge D) \Rightarrow B, D \Rightarrow B \vdash D}
               {}
          ~
         \irule{\rho_3}
               {\Gamma, (C \wedge D) \Rightarrow B, B \vdash G}
               {}
        }
        {\Gamma, (C \wedge D) \Rightarrow B, (C \wedge D) \Rightarrow B
          \vdash G}
        {\mbox{$\Rightarrow$-left$_\wedge$}}}$$
The sequent $\Gamma, (C \wedge D) \Rightarrow B, C \Rightarrow B \vdash C$ 
has a proof.
By induction hypothesis (item {\em \ref{simpand}.}, same proposition, smaller proof), the sequent 
$\Gamma, C \Rightarrow B, C \Rightarrow B \vdash C$
has a proof.
By induction hypothesis (item {\em \ref{admcontr}.}, smaller proposition) the sequent
$\Gamma, C \Rightarrow B \vdash C$ has a proof $\rho'_1$.

In a similar way, the sequent $\Gamma, D \Rightarrow B \vdash D$ has a proof
$\rho'_2$.

Independently, the sequent $\Gamma, (C \wedge D) \Rightarrow B, B \vdash G$
has a proof. By Lemma \ref{cddD}, the sequent $\Gamma, B, B \vdash G$ 
has a proof. 
By induction hypothesis (item {\em \ref{admcontr}.}, smaller proposition), the sequent 
$\Gamma, B \vdash G$ has a proof $\rho'_3$.
We build the proof 
$${\small \irule{\irule{\rho'_1}
               {\Gamma, C \Rightarrow B \vdash C}
               {}
          ~~~~~
         \irule{\rho'_2}
               {\Gamma, D \Rightarrow B \vdash D}
               {}
          ~~~~~
         \irule{\rho'_3}
               {\Gamma, B \vdash G}
               {}
        }
        {\Gamma, (C \wedge D) \Rightarrow B \vdash G}
        {\mbox{$\Rightarrow$-left$_\wedge$}}}$$

\item If the last rule of $\pi$ is a first $\Rightarrow$-left$_\vee$ rule on 
$A$, then $A = ((C \vee D) \Rightarrow B)$ and $\pi$ has the form 
$${\small \irule{\irule{\rho_1}
               {\Gamma, (C \vee D) \Rightarrow B, 
                C \Rightarrow B, D \Rightarrow B \vdash C}
               {}
          ~~~~~
         \irule{\rho_2}
               {\Gamma, (C \vee D) \Rightarrow B, B \vdash G}
               {}
        }
        {\Gamma, (C \vee D) \Rightarrow B, (C \vee D) \Rightarrow B
          \vdash G}
        {\mbox{$\Rightarrow$-left$_\vee$}}}$$
The sequent 
$\Gamma, (C \vee D) \Rightarrow B, C \Rightarrow B, D \Rightarrow B \vdash C$ 
has a proof.
By induction hypothesis (item {\em \ref{simpor}.}, same proposition, smaller proof), the sequent 
$\Gamma, C \Rightarrow B, C \Rightarrow B, D \Rightarrow B, D \Rightarrow B 
\vdash C$ has a proof.
By induction hypothesis (item {\em \ref{admcontr}.}, smaller proposition), the sequent 
$\Gamma, C \Rightarrow B, C \Rightarrow B, D \Rightarrow B \vdash C$
has a proof. 
By induction hypothesis (item {\em \ref{admcontr}.}, smaller proposition) again, the sequent 
$\Gamma, C \Rightarrow B, D \Rightarrow B \vdash C$ has a proof $\rho'_1$.

Independently, 
the sequent 
$\Gamma, (C \vee D) \Rightarrow B, B \vdash G$ has a proof.
By Lemma \ref{cddD}, the sequent $\Gamma, B, B \vdash G$
has a proof. By induction hypothesis (item {\em \ref{admcontr}.}, smaller proposition) the sequent 
$\Gamma, B \vdash G$ has a proof $\rho'_2$. 
We build the proof 
$${\small \irule{\irule{\rho'_1}
               {\Gamma, 
                C \Rightarrow B, D \Rightarrow B \vdash C}
               {}
          ~~~~~
         \irule{\rho'_2}
               {\Gamma, B \vdash G}
               {}
        }
        {\Gamma, (C \vee D) \Rightarrow B \vdash G}
        {\mbox{$\Rightarrow$-left$_\vee$}}}$$

\item 
If the last rule of $\pi$ is a second $\Rightarrow$-left$_\vee$ rule on 
$A$, we proceed in the same way.

\item If the last rule of $\pi$ is a $\Rightarrow$-left$_\Rightarrow$ rule 
on $A$, then $A = ((C \Rightarrow D) \Rightarrow B)$ and $\pi$ has the form 
$${\small \irule{\irule{\rho_1}
               {\Gamma, (C \Rightarrow D) \Rightarrow B, 
                 D \Rightarrow B, C \vdash D}
               {}
          ~~~~~
         \irule{\rho_2}
               {\Gamma, (C \Rightarrow D) \Rightarrow B, B \vdash G}
               {}
        }
        {\Gamma, (C \Rightarrow D) \Rightarrow B, (C \Rightarrow D) \Rightarrow B
          \vdash G}
        {\mbox{$\Rightarrow$-left$_\Rightarrow$}}}$$
The sequent $\Gamma, (C \Rightarrow D) \Rightarrow B, 
D \Rightarrow B, C \vdash D$ has a proof.
By induction hypothesis (item {\em \ref{simpimp}.}, same proposition, smaller proof), the sequent
$\Gamma, D \Rightarrow B, D \Rightarrow B, C, C \vdash D$
has a proof.
By induction hypothesis (item {\em \ref{admcontr}.}, smaller proposition) the sequent
$\Gamma, D \Rightarrow B, C, C \vdash D$ has a proof. 
By induction hypothesis (item {\em \ref{admcontr}.}, smaller proposition) again, the sequent
$\Gamma, D \Rightarrow B, C \vdash D$ has a proof $\rho'_1$.

Independently, 
the sequent $\Gamma, (C \Rightarrow D) \Rightarrow B, B \vdash G$ 
has a proof.
By Lemma \ref{cddD}, the sequent $\Gamma, B, B \vdash G$
has a proof. By induction hypothesis (item {\em \ref{admcontr}.}, smaller proposition), the sequent
$\Gamma, B \vdash G$ has a proof $\rho'_2$.
We build the proof
$${\small \irule{\irule{\rho'_1}
               {\Gamma, D \Rightarrow B, C \vdash D}
               {}
          ~~~~~
         \irule{\rho'_2}
               {\Gamma, B \vdash G}
               {}
        }
        {\Gamma, (C \Rightarrow D) \Rightarrow B \vdash G}
        {\mbox{$\Rightarrow$-left$_\Rightarrow$}}}$$

\item If the last rule of $\pi$ is a $\Rightarrow$-left$_\fa$ rule on $A$, 
then $A = ((\fa x~C) \Rightarrow B)$ and $\pi$ has the form
$${\small \irule{\irule{\rho_1}
               {\Gamma, (\fa x~C) \Rightarrow B, 
                (\fa x~C) \Rightarrow B \vdash C}
               {}
          ~~~~~
         \irule{\rho_2}
               {\Gamma, (\fa x~C) \Rightarrow B, B \vdash G}
               {}
        }
        {\Gamma, (\fa x~C) \Rightarrow B, (\fa x~C) \Rightarrow B \vdash G}
        {\mbox{$\Rightarrow$-left$_\fa$}}}$$
where $x$ is not free in $\Gamma$ and $B$.
The sequent $\Gamma, (\fa x~C) \Rightarrow B, 
(\fa x~C) \Rightarrow B \vdash C$ has a proof.
By induction hypothesis (item {\em \ref{admcontr}.}, same proposition, smaller proof), the sequent 
$\Gamma, (\fa x~C) \Rightarrow B \vdash C$ has a proof $\rho'_1$.

Independently, 
the sequent $\Gamma, (\fa x~C) \Rightarrow B, B \vdash G$ has a proof.
By Lemma \ref{cddD}, the sequent $\Gamma, B, B \vdash G$
has a proof. 
By induction hypothesis (item {\em \ref{admcontr}.}, smaller proposition), the sequent
$\Gamma, B \vdash G$ has a proof $\rho'_2$.
We build the proof
$${\small \irule{\irule{\rho'_1}
               {\Gamma, (\fa x~C) \Rightarrow B \vdash C}
               {}
          ~~~~~
         \irule{\rho'_2}
               {\Gamma, B \vdash G}
               {}
        }
        {\Gamma, (\fa x~C) \Rightarrow B \vdash G}
        {\mbox{$\Rightarrow$-left$_\fa$}}}$$

\item If the last rule of $\pi$ is a $\Rightarrow$-left$_\ex$ rule on $A$, 
then $A = ((\ex x~C) \Rightarrow B)$ and $\pi$ has the form 
$${\small \irule{\irule{\rho_1}
               {\Gamma, (\ex x~C) \Rightarrow B, 
                        (\ex x~C) \Rightarrow B \vdash (t/x)C}
               {}
          ~~~~~
         \irule{\rho_2}
               {\Gamma, (\ex x~C) \Rightarrow B, B \vdash G}
               {}
        }
        {\Gamma, (\ex x~C) \Rightarrow B, (\ex x~C) \Rightarrow B \vdash G}
        {\mbox{$\Rightarrow$-left$_\ex$}}}$$
The sequent $\Gamma, (\ex x~C) \Rightarrow B, (\ex x~C) \Rightarrow B
\vdash (t/x)C$ has a proof.  By induction hypothesis (item
{\em \ref{admcontr}.}, same proposition, smaller proof), the sequent $\Gamma,
(\ex x~C) \Rightarrow B \vdash (t/x)C$ has a proof $\rho'_1$.

Independently, the sequent $\Gamma, (\ex x~C) \Rightarrow B, B \vdash
G$ has a proof.  By Lemma \ref{cddD}, the sequent $\Gamma, B, B \vdash
G$ has a proof.  By induction hypothesis (item {\em \ref{admcontr}.}, smaller
proposition), the sequent $\Gamma, B \vdash G$ has a proof $\rho'_2$.
We build the proof
$${\small 
  \irule{\irule{\rho'_1}
               {\Gamma, (\ex x~C) \Rightarrow B \vdash (t/x)C}
               {}
          ~~~~~
         \irule{\rho'_2}
               {\Gamma, B \vdash G}
               {}
        }
        {\Gamma, (\ex x~C) \Rightarrow B \vdash G}
        {\mbox{$\Rightarrow$-left$_\ex$}}}$$

\item If the last rule of $\pi$ is applied to a proposition of $\Gamma$ 
or to $G$, then $\pi$ has the form 
$${\small \irule{\irule{\rho_1}
               {\Gamma_1, A, A \vdash G_1}
               {}
         ~~~~~...~~~~~
         \irule{\rho_n}
               {\Gamma_n, A, A \vdash G_n}
               {}
        }
        {\Gamma, A, A \vdash G}
        {R}}$$
The sequents $\Gamma_1, A, A \vdash G_1$, ..., $\Gamma_n, A, A \vdash G_n$, 
have proofs. By induction hypothesis (item {\em \ref{admcontr}.}, same proposition, smaller
proof) the sequents $\Gamma_1, A \vdash G_1$, ..., $\Gamma_n, A \vdash G_n$ 
have proofs $\rho'_1$, ..., $\rho'_n$. 
We build the proof 
$${\small \irule{\irule{\rho'_1}
               {\Gamma_1, A \vdash G_1}
               {}
         ~~~~~...~~~~~
         \irule{\rho'_n}
               {\Gamma_n, A \vdash G_n}
               {}
        }
        {\Gamma, A \vdash G}
        {R}}$$
\end{itemize}

\item Assume the sequent $\Gamma, A \Rightarrow B \vdash A$ has a proof
$\pi$ and the sequent $\Gamma, B \vdash G$ has a proof $\pi'$, 
we want to prove 
that the sequent $\Gamma, A \Rightarrow B \vdash G$ has a proof.

The last rule of $\pi$ may be the axiom rule applied to $A$
(1 case), a right rule
applied to $A$ (7 cases), a left rule applied to $A \Rightarrow B$ (8
cases), or a left rule applied to a proposition of $\Gamma$ (13
cases).

\begin{itemize}

\item {\bf The axiom rule on $A$.}

\begin{itemize}
\item
If the last rule of $\pi$ is an axiom rule on $A$, then $A$ is atomic,
$\Gamma = \Gamma', A$, and $\pi$ has the form 
$${\small 
  \irule{}
        {\Gamma', A, A \Rightarrow B \vdash A}
        {\mbox{axiom}}}$$
We build the proof
$${\small \irule{\irule{\pi'}
               {\Gamma', A, B \vdash G}
               {}
        }
        {\Gamma', A, A \Rightarrow B \vdash G}
        {\mbox{$\Rightarrow$-left$_{\mbox{\scriptsize \em axiom}}$}}}$$
\end{itemize}

\item {\bf The right rules on $A$.} 

\begin{itemize}
\item
If the last rule of $\pi$ is a $\top$-right rule on $A$, 
then $A = \top$ and $\pi$ has the form 
$${\small \irule{}
        {\Gamma, \top \Rightarrow B \vdash \top}
        {\mbox{$\top$-right}}}$$
We build the proof 
$${\small \irule{\irule{\pi'}
               {\Gamma, B \vdash G}
               {}
        }
        {\Gamma, \top \Rightarrow B \vdash G}
        {\mbox{$\Rightarrow$-left$_\top$}}}$$

\item
If the last rule of $\pi$ is a $\wedge$-right rule on $A$, 
then $A = (C \wedge D)$ and $\pi$ has the form 
$${\small \irule{\irule{\rho_1}
               {\Gamma, (C \wedge D) \Rightarrow B \vdash C}
               {}
         ~~~~~~
         \irule{\rho_2}
               {\Gamma, (C \wedge D) \Rightarrow B \vdash D}
               {}
        }
        {\Gamma, (C \wedge D) \Rightarrow B \vdash C \wedge D}
        {\mbox{$\wedge$-right}}}$$

The sequents $\Gamma, (C \wedge D) \Rightarrow B \vdash C$
and $\Gamma, (C \wedge D) \Rightarrow B \vdash D$ have proofs.
By induction hypothesis (item {\em \ref{simpand}.}, same proposition, smaller proof), the sequents 
$\Gamma, C \Rightarrow B \vdash C$ 
and
$\Gamma, D \Rightarrow B \vdash D$ have proofs $\rho'_1$ and $\rho'_2$.
We build the proof 
$${\small \irule{\irule{\rho'_1}
               {\Gamma, C \Rightarrow B \vdash C}
               {}
          ~~~~~
         \irule{\rho'_2}
               {\Gamma, D \Rightarrow B \vdash D}
               {}
          ~~~~~
         \irule{\pi'}
               {\Gamma, B \vdash G} 
               {}
        }
        {\Gamma, (C \wedge D) \Rightarrow B \vdash G}
        {\mbox{$\Rightarrow$-left$_\wedge$}}}$$

\item
If the last rule of $\pi$ is a first $\vee$-right rule on $A$, 
then $A = (C \vee D)$ and $\pi$ has the form
$${\small \irule{\irule{\rho}
               {\Gamma, (C \vee D) \Rightarrow B \vdash C}
               {}
        }
        {\Gamma, (C \vee D) \Rightarrow B \vdash C \vee D}
        {\mbox{$\vee$-right}}}$$
The sequent $\Gamma, (C \vee D) \Rightarrow B \vdash C$ has 
a proof. By induction hypothesis (item {\em \ref{simpor}.}, same proposition smaller 
proof), the sequent $\Gamma, C \Rightarrow B, D \Rightarrow B \vdash C$
has a proof $\rho'$. 
We build the proof 
$${\small \irule{\irule{\rho'}
               {\Gamma, C \Rightarrow B, D \Rightarrow B \vdash C}
               {}
          ~~~~~
         \irule{\pi'}
               {\Gamma, B \vdash G} 
               {}
        }
        {\Gamma, (C \vee D) \Rightarrow B \vdash G}
        {\mbox{$\Rightarrow$-left$_\vee$}}}$$

\item If the last rule of $\pi$ is a second $\vee$-right rule on $A$,
we proceed in the same way. 

\item
If the last rule of $\pi$ is a $\Rightarrow$-right rule on $A$, then 
$A = (C \Rightarrow D)$ and $\pi$ has the form 
$${\small \irule{\irule{\rho}
               {\Gamma, (C \Rightarrow D) \Rightarrow B, C \vdash D}
               {}
        }
        {\Gamma, (C \Rightarrow D) \Rightarrow B \vdash C \Rightarrow D}
        {\mbox{$\Rightarrow$-right}}}$$
The sequent $\Gamma, (C \Rightarrow D) \Rightarrow B, C \vdash D$ has  
a proof.
By induction hypothesis (item \ref{simpimp}, same proposition, smaller proof), 
the sequent
$\Gamma, D \Rightarrow B, C, C \vdash D$ has a proof.
By induction hypothesis (item {\em \ref{admcontr}.}, smaller proposition), the sequent 
$\Gamma, D \Rightarrow B, C \vdash D$ has a proof $\rho'$.
We build the proof 
$${\small \irule{\irule{\rho'}
               {\Gamma, D \Rightarrow B, C \vdash D}
               {}
          ~~~~~
         \irule{\pi'}
               {\Gamma, B \vdash G} 
               {}
        }
        {\Gamma, (C \Rightarrow D) \Rightarrow B \vdash G}
        {\mbox{$\Rightarrow$-left$_\Rightarrow$}}}$$

\item
If the last rule of $\pi$ is a $\fa$-right rule on $A$, 
then $A = (\fa x~C)$ and $\pi$ has the form 
$${\small \irule{\irule{\rho}
               {\Gamma, (\fa x~C) \Rightarrow B \vdash C}
               {}
        }
        {\Gamma, (\fa x~C) \Rightarrow B \vdash \fa x~C}
        {\mbox{$\fa$-right}}}$$
where $x$ is not free in $\Gamma$ and $B$. We build the proof 
$${\small \irule{\irule{\rho}
               {\Gamma, (\fa x~C) \Rightarrow B \vdash C}
               {}
          ~~~~~
         \irule{\pi'}
               {\Gamma, B \vdash G} 
               {}
        }
        {\Gamma, (\fa x~C) \Rightarrow B \vdash G}
        {\mbox{$\Rightarrow$-left$_\fa$}}}$$

\item
If the last rule of $\pi$ is a $\ex$-right rule on $A$, 
then $A = (\ex x~C)$ and $\pi$ has the form 
$${\small \irule{\irule{\rho}
               {\Gamma, (\ex x~C) \Rightarrow B \vdash (t/x)C}
               {}
        }
        {\Gamma, (\ex x~C) \Rightarrow B \vdash \ex x~C}
        {\mbox{$\ex$-right}}}$$
We build the proof 
$${\small \irule{\irule{\rho}
               {\Gamma, (\ex x~C) \Rightarrow B \vdash (t/x)C}
               {}
          ~~~~~
         \irule{\pi'}
               {\Gamma, B \vdash G} 
               {}
        }
        {\Gamma, (\ex x~C) \Rightarrow B \vdash G}
        {\mbox{$\Rightarrow$-left$_\ex$}}}$$
\end{itemize}

\item {\bf The left rules on $A \Rightarrow B$.}

\begin{itemize}

\item
If the last rule of $\pi$ is a
$\Rightarrow$-left$_{\mbox{\scriptsize \em axiom}}$ rule on $A \Rightarrow B$, 
then $A$ is atomic, $\Gamma = \Gamma', A$, and $\pi$ has the form
$${\small \irule{\irule{\rho}
               {\Gamma', A, B \vdash A}
               {}
        }
        {\Gamma', A, A \Rightarrow B \vdash A}
        {\mbox{$\Rightarrow$-left$_{\mbox{\scriptsize \em axiom}}$}}}$$
We build the proof 
$${\small \irule{\irule{\pi'}
               {\Gamma', A, B \vdash G}
               {}
        }
        {\Gamma', A, A \Rightarrow B \vdash G}
        {\mbox{$\Rightarrow$-left$_{\mbox{\scriptsize \em axiom}}$}}}$$

\item
If the last rule of $\pi$ is a $\Rightarrow$-left$_\top$ rule on
$A \Rightarrow B$, then $A = \top$ and $\pi$ has the form
$${\small \irule{\irule{\rho}
               {\Gamma, B \vdash \top}
               {}
        }
        {\Gamma, \top \Rightarrow B \vdash \top}
        {\mbox{$\Rightarrow$-left$_\top$}}}$$
We build the proof 
$${\small \irule{\irule{\pi'}
               {\Gamma, B \vdash G}
               {}
        }
        {\Gamma, \top \Rightarrow B \vdash G}
        {\mbox{$\Rightarrow$-left$_\top$}}}$$

\item
If the last rule of $\pi$ is a $\Rightarrow$-left$_\wedge$ rule on 
$A \Rightarrow B$, then $A = (C \wedge D)$ and $\pi$ has the form 
$${\small \irule{\irule{\rho_1}
               {\Gamma, C \Rightarrow B \vdash C}
               {}
          ~~~~~
         \irule{\rho_2}
               {\Gamma, D \Rightarrow B \vdash D}
               {}
          ~~~~~
         \irule{\rho_3}
               {\Gamma, B \vdash C \wedge D}
               {}
        }
        {\Gamma, (C \wedge D) \Rightarrow B \vdash C \wedge D}
        {\mbox{$\Rightarrow$-left$_\wedge$}}}$$
We build the proof 
$${\small \irule{\irule{\rho_1}
               {\Gamma, C \Rightarrow B \vdash C}
               {}
          ~~~~~
         \irule{\rho_2}
               {\Gamma, D \Rightarrow B \vdash D}
               {}
          ~~~~~
         \irule{\pi'}
               {\Gamma, B \vdash G}
               {}
        }
        {\Gamma, (C \wedge D) \Rightarrow B \vdash G}
        {\mbox{$\Rightarrow$-left$_\wedge$}}}$$

\item
If the last rule of $\pi$ is a first $\Rightarrow$-left$_\vee$ rule on 
$A \Rightarrow B$, then $A = (C \vee D)$ and $\pi$ has the form 
$${\small \irule{\irule{\rho_1}
               {\Gamma, C \Rightarrow B, D \Rightarrow B \vdash C}
               {}
          ~~~~~
         \irule{\rho_2}
               {\Gamma, B \vdash C \vee D}
               {}
        }
        {\Gamma, (C \vee D) \Rightarrow B \vdash C \vee D}
        {\mbox{$\Rightarrow$-left$_\vee$}}}$$
We build the proof 
$${\small \irule{\irule{\rho_1}
               {\Gamma, C \Rightarrow B, D \vdash B}
               {}
          ~~~~~
         \irule{\pi'}
               {\Gamma, B \vdash G}
               {}
        }
        {\Gamma, (C \vee D) \Rightarrow B \vdash G}
        {\mbox{$\Rightarrow$-left$_\vee$}}}$$

\item If the last rule of $\pi$ is a second $\Rightarrow$-left$_\vee$ rule
on $A \Rightarrow B$, we proceed in the same way.

\item
If the last rule of $\pi$ is a $\Rightarrow$-left$_\Rightarrow$ rule on 
$A \Rightarrow B$, then $A = (C \Rightarrow D)$ and $\pi$ has the form 
$${\small \irule{\irule{\rho_1}
               {\Gamma, D \Rightarrow B, C \vdash D}
               {}
          ~~~~~
         \irule{\rho_2}
               {\Gamma, B \vdash C \Rightarrow D}
               {}
        }
        {\Gamma, (C \Rightarrow D) \Rightarrow B \vdash C \Rightarrow D}
        {\mbox{$\Rightarrow$-left$_\Rightarrow$}}}$$
We build the proof 
$${\small \irule{\irule{\rho_1}
               {\Gamma, D \Rightarrow B, C \vdash D}
               {}
          ~~~~~
         \irule{\pi'}
               {\Gamma, B \vdash G}
               {}
        }
        {\Gamma, (C \Rightarrow D) \Rightarrow B \vdash G}
        {\mbox{$\Rightarrow$-left$_\Rightarrow$}}}$$

\item
If the last rule of $\pi$ is a $\Rightarrow$-left$_\fa$ rule on 
$A \Rightarrow B$, then $A = (\fa x~C)$ and $\pi$ has the form
$${\small \irule{\irule{\rho_1}
               {\Gamma, (\fa x~C) \Rightarrow B \vdash C}
               {}
          ~~~~~
         \irule{\rho_2}
               {\Gamma, B \vdash \fa x~C}
               {}
        }
        {\Gamma, (\fa x~C) \Rightarrow B \vdash \fa x~C}
        {\mbox{$\Rightarrow$-left$_\fa$}}}$$
where $x$ is not free in $\Gamma$ and $B$. We build the proof 
$${\small \irule{\irule{\rho_1}
               {\Gamma, (\fa x~C) \Rightarrow B \vdash C}
               {}
          ~~~~~
         \irule{\pi'}
               {\Gamma, B \vdash G}
               {}
        }
        {\Gamma, (\fa x~C) \Rightarrow B \vdash G}
        {\mbox{$\Rightarrow$-left$_\fa$}}}$$

\item
If the last rule of $\pi$ is a $\Rightarrow$-left$_\ex$ rule on 
$A \Rightarrow B$, then $A = (\ex x~C)$ and $\pi$ has the form 
$${\small \irule{\irule{\rho_1}
               {\Gamma, (\ex x~C) \Rightarrow B \vdash (t/x)C}
               {}
          ~~~~~
         \irule{\rho_2}
               {\Gamma, B \vdash \ex x~C}
               {}
        }
        {\Gamma, (\ex x~C) \Rightarrow B \vdash \ex x~C}
        {\mbox{$\Rightarrow$-left$_\ex$}}}$$
We build the proof 
$${\small \irule{\irule{\rho_1}
               {\Gamma, (\ex x~C) \Rightarrow B \vdash (t/x)C}
               {}
          ~~~~~
         \irule{\pi'}
               {\Gamma, B \vdash G}
               {}
        }
        {\Gamma, (\ex x~C) \Rightarrow B \vdash G}
        {\mbox{$\Rightarrow$-left$_\ex$}}}$$
\end{itemize}

\item {\bf The left rules on a proposition of $\Gamma$.}

\begin{itemize}
\item 
If the last rule of $\pi$ is a $\bot$-left rule on a proposition of $\Gamma$,
then $\Gamma = \Gamma', \bot$ and $\pi$ has the form
$${\small \irule{}
        {\Gamma', \bot, A \Rightarrow B \vdash A}
        {\mbox{$\bot$-left}}}$$
We build the proof 
$${\small \irule{}
        {\Gamma', \bot, A \Rightarrow B \vdash G}
        {\mbox{$\bot$-left}}}$$

\item 
If the last rule of $\pi$ is a $\wedge$-left rule on a proposition of $\Gamma$,
then $\Gamma = \Gamma', C \wedge D$ and $\pi$ has the form
$${\small \irule{\irule{\rho}
               {\Gamma', C, D, A \Rightarrow B \vdash A} 
               {}
        }
        {\Gamma', C \wedge D, A \Rightarrow B \vdash A} 
        {\mbox{$\wedge$-left}}}$$
Thus, the sequent $\Gamma', C, D, A \Rightarrow B \vdash A$ has the proof 
$\rho$. 
Independently, the sequent $\Gamma', C \wedge D, B \vdash G$ has the proof $\pi'$.
By Lemma \ref{kleeneD}, the sequent $\Gamma', C, D, B \vdash G$ has a proof.
By induction hypothesis (item {\em \ref{admimp}.}, same proposition, smaller proof), 
the sequent $\Gamma', C, D, A \Rightarrow B \vdash G$ has
a proof $\rho'$.
We build the proof 
$${\small \irule{\irule{\rho'}
               {\Gamma', C, D, A \Rightarrow B \vdash G} 
               {}
        }
        {\Gamma', C \wedge D, A \Rightarrow B \vdash G} 
        {\mbox{$\wedge$-left}}}$$

\item 
If the last rule of $\pi$ is $\vee$-left rule on a proposition of $\Gamma$,
then $\Gamma = \Gamma', C \vee D$ and $\pi$ has the form
$${\small \irule{\irule{\rho_1}
               {\Gamma', C, A \Rightarrow B \vdash A} 
               {}
         ~~~~~
         \irule{\rho_2}
               {\Gamma', D, A \Rightarrow B \vdash A} 
               {}
        }
        {\Gamma', C \vee D, A \Rightarrow B \vdash A} 
        {\mbox{$\vee$-left}}}$$
Thus, the sequents $\Gamma', C, A \Rightarrow B \vdash A$ and 
$\Gamma', D, A \Rightarrow B \vdash A$ 
have proofs $\rho_1$ and $\rho_2$. 
Independently, the sequent $\Gamma', C \vee D, B \vdash G$ has the proof $\pi'$.
By Lemma \ref{kleeneD}, the sequents
$\Gamma', C, B \vdash G$ and 
$\Gamma', D, B \vdash G$ have proofs.
By induction hypothesis (item {\em \ref{admimp}.}, same proposition, smaller proof), 
the sequents 
$\Gamma', C, A \Rightarrow B \vdash G$ and
$\Gamma', D, A \Rightarrow B \vdash G$ 
have proofs $\rho'_1$ and $\rho'_2$. 
We build the proof
$${\small \irule{\irule{\rho'_1}
               {\Gamma', C, A \Rightarrow B \vdash G} 
               {}
         ~~~~~
         \irule{\rho'_2}
               {\Gamma', D, A \Rightarrow B \vdash G} 
               {}
        }
        {\Gamma', C \vee D, A \Rightarrow B \vdash G} 
        {\mbox{$\vee$-left}}}$$

\item 
If the last rule of $\pi$ is $\ex$-left rule on a proposition of $\Gamma$,
then $\Gamma = \Gamma', \ex x~C$ and $\pi$ has the form
$${\small \irule{\irule{\rho}
               {\Gamma', C, A \Rightarrow B \vdash A} 
               {}
        }
        {\Gamma', \ex x~C, A \Rightarrow B \vdash A} 
        {\mbox{$\ex$-left}}}$$
where $x$ is not free in $\Gamma'$, $A$ and $B$.
Thus, the sequent $\Gamma', C, A \Rightarrow B \vdash A$ has the proof $\rho$.
Independently, 
the sequent $\Gamma', \ex x~C, B \vdash G$ has the proof $\pi'$.
By Lemma \ref{kleeneD}, the sequent $\Gamma', C, B \vdash G$ has a proof.
By induction hypothesis (item {\em \ref{admimp}.}, same proposition, smaller proof), 
the sequent $\Gamma, C, A \Rightarrow B \vdash G$ has
a proof $\rho'$.
We build the proof 
$${\small \irule{\irule{\rho'}
               {\Gamma', C, A \Rightarrow B \vdash G} 
               {}
        }
        {\Gamma', \ex x~C, A \Rightarrow B \vdash G} 
        {\mbox{$\ex$-left}}}$$

\item 
If the last rule of $\pi$ is a contr-$\fa$-left rule on a proposition of 
$\Gamma$, then $\Gamma = \Gamma', \fa x~C$ and $\pi$ has the form
$${\small \irule{\irule{\rho}
               {\Gamma', \fa x~C, (t/x)C, A \Rightarrow B \vdash A} 
               {}
        }
        {\Gamma', \fa x~C, A \Rightarrow B \vdash A} 
        {\mbox{contr-$\fa$-left}}}$$
Thus, the sequent 
$\Gamma', \fa x~C, (t/x)C, A \Rightarrow B \vdash A$ has 
the proof $\rho$.
Independently,
the sequent $\Gamma', \fa x~C, B \vdash G$ has the proof $\pi'$.
By Lemma \ref{weakenningD}, the sequent 
$\Gamma', \fa x~C, (t/x)C, B \vdash G$ has a proof.
By induction hypothesis (item {\em \ref{admimp}.}, same proposition, smaller proof), 
the sequent $\Gamma, \fa x~C, (t/x)C, A \Rightarrow B \vdash G$ has
a proof $\rho'$. 
We build the proof 
$${\small \irule{\irule{\rho'}
               {\Gamma', \fa x~C, (t/x)C, A \Rightarrow B \vdash G} 
               {}
        }
        {\Gamma', \fa x~C, A \Rightarrow B \vdash G} 
        {\mbox{contr-$\fa$-left}}}$$

\item
If the last rule of $\pi$ is a 
$\Rightarrow$-left$_{\mbox{\scriptsize \em axiom}}$ rule on a
proposition of $\Gamma$, 
then $\Gamma = \Gamma', P, P \Rightarrow E$, where $P$ is atomic, 
and $\pi$ has the form 
$${\small \irule{\irule{\rho}
               {\Gamma', P, E, A \Rightarrow B \vdash A}
               {}
        }
        {\Gamma', P, P \Rightarrow E, A \Rightarrow B \vdash A}
        {\mbox{$\Rightarrow$-left$_{\mbox{\scriptsize \em axiom}}$}}}$$

Thus, the sequent $\Gamma', P, E, A \Rightarrow B \vdash A$ has the proof 
$\rho$. 
Independently, the sequent $\Gamma', P, P \Rightarrow E, B \vdash G$ has the proof $\pi'$.
By Lemma \ref{cddD}, the sequent $\Gamma', P, E, B \vdash G$ has a proof.
By induction hypothesis (item {\em \ref{admimp}.}, same proposition, smaller proof), 
the sequent $\Gamma', P, E, A \Rightarrow B \vdash G$ has a proof
$\rho'$.
We build the proof
$${\small \irule{\irule{\rho'}{\Gamma', P, E, A \Rightarrow B \vdash G}{}
        }
        {\Gamma', P, P \Rightarrow E, A \Rightarrow B \vdash G}
        {\mbox{$\Rightarrow$-left$_{\mbox{\scriptsize \em axiom}}$}}}$$

\item
If the last rule of $\pi$ is a $\Rightarrow$-left$_\top$ rule on a
proposition of $\Gamma$, then $\Gamma = \Gamma', \top \Rightarrow E$ and 
$\pi$ has the form 
$${\small \irule{\irule{\rho}
               {\Gamma', E, A \Rightarrow B \vdash A}
               {}
        }
        {\Gamma', \top \Rightarrow E, A \Rightarrow B \vdash A}
        {\mbox{$\Rightarrow$-left$_\top$}}}$$
Thus, the sequent $\Gamma', E, A \Rightarrow B \vdash A$ has the proof $\rho$.
Independently, the sequent $\Gamma', \top \Rightarrow E, B \vdash G$ has the proof 
$\pi'$.
By Lemma \ref{cddD}, the sequent $\Gamma', E, B \vdash G$ has a proof.
By induction hypothesis (item {\em \ref{admimp}.}, same proposition, smaller proof),
the sequent $\Gamma', E, A \Rightarrow B \vdash G$
has a proof $\rho'$.
We build the proof
$${\small \irule{\irule{\rho'}
               {\Gamma', E, A \Rightarrow B \vdash G}
               {}
        }
        {\Gamma', \top \Rightarrow E, A \Rightarrow B \vdash G}
        {\mbox{$\Rightarrow$-left$_\top$}}}$$

\item
If the last rule of $\pi$ is a $\Rightarrow$-left$_\wedge$ rule 
on a proposition of $\Gamma$,
then $\Gamma = \Gamma', (C \wedge D) \Rightarrow E$ and 
$\pi$ has the form 
$${\scriptsize 
  \irule{\irule{\rho_1}
               {\Gamma', C \Rightarrow E, A \Rightarrow B \vdash C}
               {}
         ~~~~~
         \irule{\rho_2}
               {\Gamma', D \Rightarrow E, A \Rightarrow B \vdash D}
               {}
         ~~~~~
         \irule{\rho_3}
               {\Gamma', E, A \Rightarrow B \vdash A}
               {}
        }
        {\Gamma', (C \wedge D) \Rightarrow E, A \Rightarrow B \vdash A}
        {\mbox{$\Rightarrow$-left$_\wedge$}}}$$
Thus, the sequent $\Gamma', E, A \Rightarrow B \vdash A$ has the proof 
$\rho_3$. 
Independently, the sequent $\Gamma', (C \wedge D) \Rightarrow E, B \vdash G$ has the proof 
$\pi'$.
By Lemma \ref{cddD}, the sequent $\Gamma', E, B \vdash G$ has a proof.
By induction hypothesis (item {\em \ref{admimp}.}, same proposition, smaller proof), the 
sequent $\Gamma', E, A \Rightarrow B \vdash G$ has a proof $\rho'_3$.
We build the proof
$${\scriptsize 
  \irule{\irule{\rho_1}
               {\Gamma', C \Rightarrow E, A \Rightarrow B \vdash C}
               {}
         ~~~~~
         \irule{\rho_2}
               {\Gamma', D \Rightarrow E, A \Rightarrow B \vdash D}
               {}
         ~~~~~
         \irule{\rho'_3}
               {\Gamma', E, A \Rightarrow B \vdash G}
               {}
        }
        {\Gamma', (C \wedge D) \Rightarrow E, A \Rightarrow B \vdash G}
        {\mbox{$\Rightarrow$-left$_\wedge$}}}$$

\item
If the last rule of $\pi$ is a first $\Rightarrow$-left$_\vee$ rule 
on a proposition of $\Gamma$, then 
$\Gamma = \Gamma', (C \vee D) \Rightarrow E$ and $\pi$  has the form 
$${\small \irule{\irule{\rho_1}
               {\Gamma', C \Rightarrow E, D \Rightarrow E, A \Rightarrow B 
                \vdash C}
               {}
         ~~~~~
         \irule{\rho_2}
               {\Gamma', E, A \Rightarrow B \vdash A}
               {}
        }
        {\Gamma', (C \vee D) \Rightarrow E, A \Rightarrow B \vdash A}
        {\mbox{$\Rightarrow$-left$_\vee$}}}$$
Thus, the sequent $\Gamma', E, A \Rightarrow B \vdash A$ has the proof $\rho_2$.
Independently, the sequent $\Gamma', (C \vee D) \Rightarrow E, B \vdash G$ has 
the proof $\pi'$.
By Lemma \ref{cddD}, the sequent $\Gamma', E, B \vdash G$ has a proof.
By induction hypothesis (item {\em \ref{admimp}.}, same proposition, smaller proof), 
the sequent $\Gamma', E, A \Rightarrow B \vdash G$
has a proof $\rho'_2$. We build the proof
$${\small \irule{\irule{\rho_1}{\Gamma', C \Rightarrow E, D \Rightarrow E, A \Rightarrow B \vdash C}{}
         ~~~~~
         \irule{\rho'_2}{\Gamma', E, A \Rightarrow B \vdash G}{}
        }
        {\Gamma', (C \vee D) \Rightarrow E, A \Rightarrow B \vdash G}
        {\mbox{$\Rightarrow$-left$_\vee$}}}$$

\item
If the last rule of $\pi$ is a second $\Rightarrow$-left$_\vee$ rule 
on a proposition of $\Gamma$, we proceed in the same way.

\item
If the last rule of $\pi$ is a $\Rightarrow$-left$_\Rightarrow$ rule 
on a proposition of $\Gamma$, 
then $\Gamma = \Gamma', (C \Rightarrow D) \Rightarrow E$ and 
$\pi$  has the form 
$${\small \irule{\irule{\rho_1}
               {\Gamma', D \Rightarrow E, C, A \Rightarrow B \vdash D}
               {}
         ~~~~~
         \irule{\rho_2}
               {\Gamma', E, A \Rightarrow B \vdash A}
               {}
        }
        {\Gamma', (C \Rightarrow D) \Rightarrow E, A \Rightarrow B \vdash A}
        {\mbox{$\Rightarrow$-left$_\Rightarrow$}}}$$
Thus, the sequent $\Gamma', E, A \Rightarrow B \vdash A$ has the proof 
$\rho_2$. 
Independently, the sequent $\Gamma', (C \Rightarrow D) \Rightarrow E, B \vdash G$ has the
proof $\pi'$.
By Lemma \ref{cddD}, the sequent $\Gamma', E, B \vdash G$ has a proof. 
By induction hypothesis (item {\em \ref{admimp}.}, same proposition, smaller proof), 
the sequent $\Gamma', E, A \Rightarrow B \vdash G$ has a proof $\rho'_2$.
We build the proof
$${\small \irule{\irule{\rho_1}{\Gamma', D \Rightarrow E, C, A \Rightarrow B \vdash D}{}
         ~~~~~
         \irule{\rho'_2}{\Gamma', E, A \Rightarrow B \vdash G}{}
        }
        {\Gamma', (C \Rightarrow D) \Rightarrow E, A \Rightarrow B \vdash G}
        {\mbox{$\Rightarrow$-left$_\Rightarrow$}}}$$

\item
If the last rule of $\pi$ is a $\Rightarrow$-left$_\fa$ rule 
on a proposition of $\Gamma$,
then $\Gamma = \Gamma', (\fa x~C) \Rightarrow E$ and 
$\pi$ has the form 
$${\small \irule{\irule{\rho_1}
               {\Gamma', (\fa x~C) \Rightarrow E, A \Rightarrow B \vdash C}
               {}
         ~~~~~
         \irule{\rho_2}
               {\Gamma', E, A \Rightarrow B \vdash A}
               {}
        }
        {\Gamma', (\fa x~C) \Rightarrow E, A \Rightarrow B \vdash A}
        {\mbox{$\Rightarrow$-left$_\fa$}}}$$
where $x$ is not free in $\Gamma'$, $E$, $A$ and $B$.
Thus, the sequent $\Gamma', E, A \Rightarrow B \vdash A$ has the proof $\rho_2$.
Independently, the sequent $\Gamma', (\fa x~C) \Rightarrow E, B \vdash G$ has the proof 
$\pi'$. 
By Lemma \ref{cddD}, the sequent $\Gamma', E, B \vdash G$ has a proof.
By induction hypothesis (item {\em \ref{admimp}.}, same proposition, smaller proof), 
the sequent $\Gamma', E, A \Rightarrow B \vdash G$ has a proof $\rho'_2$.
We build the proof
$${\small \irule{\irule{\rho_1}{\Gamma', (\fa x~C) \Rightarrow E, A \Rightarrow B \vdash C}{}
         ~~~~~
         \irule{\rho'_2}{\Gamma', E, A \Rightarrow B \vdash G}{}
        }
        {\Gamma', (\fa x~C) \Rightarrow E, A \Rightarrow B \vdash G}
        {\mbox{$\Rightarrow$-left$_\fa$}}}$$

\item
If the last rule of $\pi$ is a $\Rightarrow$-left$_\ex$ rule 
on a proposition of $\Gamma$,
then $\Gamma = \Gamma', (\ex x~C) \Rightarrow E$ and 
$\pi$ has the form 
$${\small \irule{\irule{\rho_1}{\Gamma', (\ex x~C) \Rightarrow E, A \Rightarrow B \vdash(t/x)C}{}
         ~~~~~
         \irule{\rho_2}{\Gamma', E, A \Rightarrow B \vdash A}{}
        }
        {\Gamma', (\ex x~C) \Rightarrow E, A \Rightarrow B \vdash A}
        {\mbox{$\Rightarrow$-left$_\ex$}}}$$
Thus, the sequent $\Gamma', E, A \Rightarrow B \vdash A$ has the proof $\rho_2$.
Independently, the sequent $\Gamma', (\ex x~C) \Rightarrow E, B \vdash G$ has a proof 
$\pi'$. By Lemma \ref{cddD}, the sequent $\Gamma', E, B \vdash G$
has a proof. 
By induction hypothesis (item {\em \ref{admimp}.}, same proposition, smaller proof), 
the sequent $\Gamma', E, A \Rightarrow B \vdash G$ has a proof $\rho'_2$.
We build the proof
$${\small 
   \irule{\irule{\rho_1}{\Gamma', (\ex x~C) \Rightarrow E, A \Rightarrow B \vdash(t/x)C}{}
         ~~~~~
         \irule{\rho'_2}{\Gamma', E, A \Rightarrow B \vdash G}{}
        }
        {\Gamma', (\ex x~C) \Rightarrow E, A \Rightarrow B \vdash G}
        {\mbox{$\Rightarrow$-left$_\ex$}}}$$
\end{itemize}
\end{itemize}
\end{enumerate}
\end{proof}

\begin{theorem}
The system $\cal D$ is equivalent to the system $\cal K$. 
\end{theorem}

\begin{proof}
By Lemma \ref{admissibility} (item {\em \ref{admimp}.}), the
contr-$\Rightarrow$-left rule is admissible in the system $\cal D$. Thus, 
the system $\cal D$ is equivalent to the system $\cal D$ plus the
contr-$\Rightarrow$-left rule, and hence to the system $\cal K$.
\end{proof}
}

This system $\cal D$ gives the decidability of a larger fragment of
Constructive
Predicate Logic containing all connectives, shallow universal and
existential quantifiers---that is quantifiers that occur under no
implication at all---and negative existential quantifiers.  This
fragment contains the prenex fragment of Constructive Predicate Logic,
that itself contains Constructive Propositional Logic.

\section*{Acknowledgements}

This work is supported by the ANR-NSFC project LOCALI (NSFC
61161130530 and ANR 11 IS02 002 01) and the Chinese National Basic
Research Program (973) Grant No. 2014CB340302.

\end{document}